\newcommand{\ignore}[1]{}
\def\mb{\mathbf}
\def\bm{\boldsymbol}
\newcommand{\Rmnum}[1]{\uppercase\expandafter{\romannumeral #1\relax}}
\def\mb{\mathbf}
\newcommand*\diff{\mathop{}\!\mathrm{d}}
\mathchardef\mhyphen="2D
\DeclareMathOperator{\Tr}{Tr}
\newtheorem{theorem}{Theorem}[section]
\newtheorem{proposition}[theorem]{Proposition}
\begin{document}
\preprint{}

\title{Machine learning based non-Newtonian fluid model with molecular fidelity}
\author{Huan Lei}
\email{leihuan@msu.edu}
\affiliation{Department of Computational Mathematics, Science \& Engineering and Department of Statistics \& Probability, Michigan State University, MI 48824, USA}%
\author{Lei Wu}
\affiliation{Department of Mathematics and Program in Applied and Computational Mathematics, \\Princeton University, 
NJ 08544, USA}%
\author{Weinan E}
\email{weinan@math.princeton.edu}
\affiliation{Department of Mathematics and Program in Applied and Computational Mathematics, \\Princeton University, 
NJ 08544, USA}%


\begin{abstract}
We introduce a machine-learning-based framework 
for constructing continuum non-Newtonian fluid dynamics model directly from a micro-scale 
description. Dumbbell polymer solutions are used as examples to demonstrate the essential ideas.
To faithfully retain molecular fidelity, we establish a micro-macro correspondence
via a set of encoders for the micro-scale polymer configurations
and their macro-scale counterparts, a set of nonlinear conformation tensors.
The dynamics of these conformation tensors can be derived 
from the micro-scale model and the relevant terms can 
be parametrized using machine learning.
The final model, named the deep non-Newtonian model (DeePN$^2$), takes
the form of conventional non-Newtonian fluid dynamics models, with a new form of the
objective tensor derivative.
Both  the formulation of the dynamic equation and the neural network representation rigorously preserve the rotational
invariance, which ensures the admissibility of the constructed model.
Numerical results demonstrate the accuracy of DeePN$^2$,
where models based on empirical closures show limitations.

\end{abstract}

\pacs{}

\maketitle
\section{Introduction}
Accurate modeling of non-Newtonian fluid flows has been a long-standing problem. 
Existing hydrodynamic models have to resort to ad hoc assumptions either directly at the macro-scale level when writing down
constitutive laws, or as closure assumptions when deriving macro-scale models from some underlying micro-scale description.
A  variety of  empirical 
constitutive models \cite{Larson88,Owens_Phillips_2002} of both integral and derivative types have been developed, including  Oldroyd-B 
\cite{Oldroyd_Wilson_PRSLA_1950}, Giesekus \cite{Giesekus_JNNFM_1982}, finite 
extensible nonlinear elastic Peterlin (FENE-P) \cite{Peterlin_Polymer_Science_1966,
Bird_Doston_JNNFM_1980}, Rivlin-Sawyers \cite{Riv_Sawy_AnnFluid_1971}.
These models are designed such that 
proper frame-indifference is satisfied, but otherwise left with few physical constraints.
Despite their broad applications, the robustness and universal applicability
of these models are still in doubt.
In principle, viscoelastic effects are determined
by the polymer configuration distribution, which can be obtained by directly solving the micro-scale Fokker-Planck
equation coupled with the macro-scale hydrodynamic equation \cite{Fan_Acta_1989}.  
However, the cost of such an approach becomes prohibitive for large scale simulations due to the high-dimensionality
of the FK equation. 
Semi-analytical closures \cite{Warner_IECF_1972, Warner_PhD_1971, FENE_L_S_JNNFM_1999, 
Yu_Du_mms_2005, Hyon_Du_mms_2008} based on moment approximations of the configuration 
distribution were developed for dumbbell systems. Applications to non-steady flows \cite{Warner_IECF_1972, Warner_PhD_1971} and 
more complex intramolecular potential \cite{FENE_L_S_JNNFM_1999, Yu_Du_mms_2005, Hyon_Du_mms_2008} remains largely open due to the 
high-dimensionality of the configuration space.
Several alternative approaches \cite{Laso_Ottinger_JNNFM_1993, Hulsen_Heel_JNNFM_1997, REN_E_HMM_complex_fluid_2005} 
based on sophisticated coupling between the micro- and macro-scale  models have been proposed. 
However, the efficiency and accuracy of these approaches rely on a separation between the relevant macro- and
micro-scales, something that does not usually happen in practice.

Motivated by the recent successes in applying machine learning (ML) to construct  reduced dynamics of 
complex systems \cite{ma2018model, Vlachas_Byeon_PRSA_2018,
Han_Ma_PNAS_2019, ling_kurzawski_JFM_2016, Wang_Wu_Xiao_PRF_2017, Lusch_Kutz_Brunton_Nature_2018,
Linot_Graham_2019, Raissi_Kar_2020}, we aim to learn
accurate and admissible non-Newtonian hydrodynamic models directly from a micro-scale 
description.
However, we note that directly applying machine learning to construct such first-principled based fluid models is highly
non-trivial. The major challenge lies in how to formulate the micro-macro correspondence in a natural way, 
such that the constructed macroscale model can faithfully retain the viscoelastic properties on the 
molecular-level fidelity. Moreover, the 
deep neural network (DNN) representations need to rigorously preserve the physical symmetries. 
Secondly, to construct the 
governed reduced dynamics, most of the current ML-based approaches rely on the time-series samples and the 
various delicate numerical treatments to evaluate the time derivatives (e.g., see discussion \cite{Rudy_Kutz_Science_Ad_2017}). 
However, the microscale simulation data of non-Newtonian fluids are often limited by the affordable computational resource
and superimposed with noise (e.g., due to thermal fluctuations). It is generally impractical to obtain the accurate 
macro-scale time derivative information from the training data. Moreover, the objective tensor derivative
in existing models is chosen empirically, e.g., upper-convected \cite{Oldroyd_Wilson_PRSLA_1950}, 
covariant \cite{Oldroyd_Wilson_PRSLA_1950}, corrotational \cite{Zaremba_1903}, to ensure
the rotational symmetry constraint. Such ambiguities will be inherited if we directly learn the dynamics from the
time-series samples. A third challenge is the model interpretability, a well-known  weakness of 
machine-learning-based models. 

In this study, we present a machine-learning-based approach, the deep
non-Newtonian model (DeePN$^2$), for learning the non-Newtonian hydrodynamic model directly from the
microscale description. 
To address the aforementioned challenges, in DeePN$^2$, we learn a set of encoder functions directly
from  micro-scale simulation data, which can be used to extract the ``features'' of sub-grid 
polymer configuration. Such features are essentially the macro-scale conformation tensors which are used 
in the construction of the constitutive laws. To retain the molecular level fidelity, the second idea is 
to formulate the ansatz of reduced dynamics  directly from the micro-scale Fokker-Planck equation. 
The learning with the ansatz only requires the microscale configuration samples without the need of the time-series 
training data. Thirdly, to ensure the model admissibility, we propose a general  
symmetry-preserving DNN structure to represent the terms in the reduced dynamics. 

All these are done in an end-to-end fashion, by simultaneously learning the micro-scale encoders, 
the polymer stress, and the evolution dynamics of the macro-scale conformation tensors. 
The constructed model takes a form similar to the traditional hydrodynamic model and retains clear 
physical interpretations for individual terms. The conformation tensors are a natural extension of 
the end-end orientation tensor used in classical rheological models. 
A new objective tensor derivative naturally arises in this way. It takes a different form from the 
current choices in those empirical macroscale models. It has a unique micro-scale interpretation and 
can be systematically constructed without ambiguity. Numerical results demonstrate the accuracy 
of this machine-learning-based model as well as the crucial role of the constructed tensor derivatives 
encoded with the molecular structure.

\section{Machine-learning based non-Newtonian hydrodynamic model}
\subsection{Generalized hydrodynamic model}
Let us start with the continuum level description of the dynamics of incompressible non-Newtonian flow in the following
generalized form
\begin{equation}
\begin{split}
\nabla \cdot \mb u &= 0 \\
\rho \frac{\rm d \mb u}{\rm d t} &= -\nabla p + \nabla \cdot (\bm\tau_{s} + \bm\tau_{\rm p}) + \mb f_{\rm ext},
\end{split}
\label{eq:momentum_transport_close}
\end{equation}
where $\rho$, $\mb u$ and $p$ represent the fluid density, velocity and pressure field, respectively. $\mb f_{\rm ext}$ is the external body force and 
$\bm\tau_s$ is the solvent stress tensor with shear viscosity $\eta_s$, which is assumed to take the Newtonian form 
$\bm \tau_s  = \eta_s(\nabla \mb u + \nabla \mb u^T)$. $\bm\tau_{\rm p}$ is the polymer stress tensor whose constitutive law
is generally unknown. 
To close Eq. \eqref{eq:momentum_transport_close}, traditional models,  e.g., Oldroyd-B, Giesekus, and FENE-P, are 
generally based on the approximation of $\bm \tau_{\rm p}$ in terms of an empirically chosen conformation tensor (e.g., the end-end orientation 
tensor), along with some heuristic closure assumption 
for the dynamics of such a tensor.  


To map the  microscopic model to the continuum model \eqref{eq:momentum_transport_close}, we assume that (\Rmnum{1}) 
the polymer solution can be treated as nearly incompressible 
on the continuum scale; and (\Rmnum{2}) the polymer solution is semi-dilute, i.e., the polymer stress $\bm \tau_{p}$  is dominated by 
intramolecular interaction $V_{\rm b}(r)$, where $r = \vert \mb r\vert$ and $\mb r$ is the end-end vector 
between the two beads of a dumbbell  molecule. The form of $V_{\rm b}(r)$ will be specified later. 
The current approach can be applied to more complicated systems, see  
Appendix for results of three-bead suspension model.
Theoretically, the instantaneous $\bm \tau_{p}$  can be determined by the probability 
density function $\rho(\mb r, t)$. In DeePN$^2$, instead of directly constructing $\rho(\mb r,t)$,
we seek a micro-macro correspondence 
that maps the polymer configurations to a set of conformation tensors, by which we 
construct the stress model and the evolution dynamics, i.e., 
\begin{subequations}
\begin{align} 
&\bm \tau_p = \mb G(\mb c_1, \mb c_2, \cdots \mb c_n) \label{eq:stress_model} \\
&\frac{\mathcal{D}{\mb c}_i}{\mathcal{D} t} = \mb H_i(\nabla \mb u, \mb c_1, \cdots, \mb c_n),\label{eq:c_evolution} 
\end{align}
\label{eq:moment_model}
\end{subequations}
where $\mb c_i \in \mathbb{R}^{3\times3}$ represents the $i\mhyphen$th conformation tensor of the polymer 
configurations within the local volume unit. 
It represents the macroscale features by which we construct 
the polymer stress $\bm \tau_{\rm p}$ and the evolution dynamics. The detailed formulation will be specified
later. 
In particular, if we choose $n=1$, $\mb c_1$ the orientation tensor and approximate $\mb G(\mb c_1)$ with 
the linear or the mean-field approximation, \eqref{eq:stress_model} recovers the empirical Hookean and FENE-P model. 
Moreover, we emphasize that $\left\{\mb c_i\right\}_{i=1}^{n}$ are \emph{not the standard high-order moments for 
the closure approximations of the microscale configuration density} $\rho(\mb r, t)$
(e.g., see \cite{Warner_IECF_1972, Warner_PhD_1971, FENE_L_S_JNNFM_1999, Yu_Du_mms_2005, Hyon_Du_mms_2008}). 
They are the nonlinear conformation tensors directly learnt from the microscale samples 
for the approximation of stress $\bm\tau_{\rm p}$, rather than the recovery of the high-dimensional distribution $\rho(\mb r, t)$.

In principle, 
with certain pre-assumptions on the formulation of $\mb c_i$, a straightforward approach is to learn \eqref{eq:moment_model} 
on the macro-scale level as a ``black-box'' using time-series samples from microscale simulations. However, this requires 
the explicit form of the objective tensor derivative $\frac{\mathcal{D}{\mb c}_i}{\mathcal{D} t}$, as well as
the accurate evaluation of time-derivatives from the time-series samples. Unfortunately, both conditions become impractical
for the micro-scale non-Newtonian fluid simulations 
Alternatively, we employ machine learning to establish a micro-macro
correspondence and derive the ansatz of \eqref{eq:moment_model} directly from the micro-scale descriptions.

\subsection{Modeling ansatz derived from the micro-scale description}
To faithfully retain the micro-scale molecular fidelity, we construct $\left\{\mb c_i\right\}_{i=1}^n$ by
directly learning a set of encoders from microscale samples, i.e.,
\begin{equation}
\begin{split} 
\mb c_i = \langle \mb B_i(\mb r) \rangle \quad \mb B_i = \mb f_i (\mb r) \mb f_i^T(\mb r) \quad \quad i = 1, 2, \cdots, n, 
\end{split}
\label{eq:encoder_model}
\end{equation}
where $\mb B_i$ is a microscale encoder function that maps the microscale polymer configuration to the macroscale
feature $\mb c_i$. It has an explicit micro-scale interpretation --- the average of the $i\mhyphen$th 
second-order tensor $\mb B_i$ with respect to the encoder vector $\mb f_i(\mb r): \mathbb{R}^3 \to \mathbb{R}^{3}$.

One reason for the choice of $\mb B_i(\mb r)$ as a second-order tensor is as follows. 
The stress model $\mb G(\cdot)$ needs to retain the rotation symmetry. 
As the input of the stress model $\mb G(\cdot)$, $\mb B_i(\mb r)$ 
needs to retain  rotational symmetry in accordance
with the polymer configuration $\mb r$. For example, the
vector form of $\mb B_i(\mb r)$ needs to satisfy $\mb B_i(\mb Q\mb r) = \mb Q\mb B_i(\mb r)$ for any unitary 
matrix $\mb Q$. This yields $\left\langle \mb B_i(\mb r)\right\rangle \equiv 0$ (see Appendix). 
A simple non-trivial choice is
a second-order tensor taking the form of Eq. \eqref{eq:encoder_model}, so that $\mb B_i(\mb r)$ satisfies
$\mb B_i(\mb Q\mb r) = \mb Q \mb B_i(\mb r) \mb Q^T$ and rotational symmetry of $\mb G(\cdot)$ can be
imposed accordingly.

Model \eqref{eq:moment_model} and \eqref{eq:encoder_model} aims at extracting a set of configuration ``features'', 
represented by the micro-scale encoder $\left\{\mb f_i\right\}_{i=1}^n$ and 
the macro-scale conformation tensor $\left\{\mb c_i\right\}_{i=1}^n$, such
that the polymer stress $-\langle \mb r\nabla V_{\rm b}(r)^{T}\rangle$ can be well approximated by $\mb G(\cdot)$ and the evolution of
 $\{\mb c_i \}_{i=1}^{n}$ can be modeled by $\{\mb H_i(\cdot)\}_{i=1}^n$ self-consistently. 
As a special case, if $n = 1$ and $\mb f_1(\mb r) = \mb r$, $\mb c_1$ recovers the end-end orientation tensor  and
the stress model recovers the aforementioned rheological models under special choices of $\mb G(\cdot)$. 
In practice, to accurately capture the nonlinear effects in $V_{\rm b}$, multiple nonlinear conformation moments are needed. 

To learn $\mb G(\cdot)$ and $\mb H(\cdot)$, one important constraint comes from rotational symmetry.
Let  $\widetilde{\mb r} = \mb Q \mb r$,  where $\mb Q$ 
is unitary.
We must have
\begin{subequations}
\begin{align}
&\mb f_i(\widetilde{\mb r}) =  \mb Q \mb f_i(\mb r) \label{eq:rotation_B}\\
&\mb G(\widetilde{\mb c}_1, \cdots, \widetilde{\mb c}_n) = \mb Q \mb G(\mb c_1, \cdots, \mb c_n) \mb Q^T \\
&\mb H_i(\widetilde{\mb c}_1, \cdots, \widetilde{\mb c}_n) = \mb Q \mb H_i(\mb c_1, \cdots, \mb c_n) \mb Q^T, 
\label{eq:rotation_G_H}
\end{align}
\label{eq:rotation_B_G_H}
\end{subequations}
where $\widetilde{\mb c}_i = \mb Q \mb c_i \mb Q^T$. 
For the tensor derivative $\mathcal{D}\mb c_i/\mathcal{D} t$,  we should have
\begin{equation}
\widetilde{\frac{\mathcal{D} \mb c_i}{\mathcal{D}t}} = \mb Q \frac{\mathcal{D} \mb c_i}{\mathcal{D}t}\mb Q^T 
\quad i = 1, 2, \cdots, n.
\label{eq:rotation_c_evolution}
\end{equation}
This constraint is satisfied by the various objective tensor derivatives in most existing rheological models, such as the
upper-convected \cite{Oldroyd_Wilson_PRSLA_1950}, the covariant \cite{Oldroyd_Wilson_PRSLA_1950} and 
the Zaremba-Jaumann \cite{Zaremba_1903} derivatives, but these forms are not
suitable for us since they lack the desired accuracy. 
Fortunately these constraints are satisfied automatically if we formulate our macro-scale model
based on the underlying micro-scale model. 

We start from the Fokker-Planck equation \cite{Bird_Curtiss_book_vol_2}, 
\begin{equation}
\frac{\partial \rho(\mb r, t)}{\partial t} = -\nabla\cdot
\left[(\bm\kappa\cdot\mb r)\rho - \frac{2k_BT}{\gamma}\nabla\rho 
- \frac{2}{\gamma}\nabla V_{\rm b}(r) \rho\right],
\label{eq:FK_dumbbell} 
\end{equation}
where $k_BT$ is the thermal energy, $\gamma$ is the solvent friction coefficient and $\bm\kappa := \nabla \mb u^T$ is 
the strain of the fluid.  
Instead of solving \eqref{eq:FK_dumbbell}, we consider the evolution of $\mb c_i$, 
\begin{equation}
\frac{\rm d}{\rm dt} \mb c_i - \bm\kappa:\left\langle
\mb r\nabla_{\mb r}\otimes \mb B_i(\mb r) \right\rangle  = 
\frac{2k_BT}{\gamma}\left\langle \nabla^2_{\mb r} \mb B_i(\mb r) \right\rangle
+ 
\frac{2}{\gamma} \left\langle \nabla V_{\rm b}(r) \cdot \nabla_{\mb r} \mb B_i(\mb r) \right\rangle,
\label{eq:FK_B_evoluation}
\end{equation}
where $:$ is the double-dot product. We can prove that Eqs. (\ref{eq:FK_dumbbell})
and (\ref{eq:FK_B_evoluation}) are rotationally invariant. In particular, the combined
left-hand-side terms of \eqref{eq:FK_B_evoluation} satisfy the symmetry condition 
in \eqref{eq:rotation_c_evolution} (see proof in Appendix). 
Therefore, the combined terms establish a generalized objective
tensor derivative. It takes a different form from the ones \cite{Oldroyd_Wilson_PRSLA_1950, Zaremba_1903} in existing models
and rigorously preserves the rotational symmetry condition (\ref{eq:rotation_c_evolution}). Accordingly, 
the hydrodynamic model (\ref{eq:moment_model}) take the following ansatz
\begin{subequations}
\begin{align}
\frac{\mathcal{D} \mb c_i}{\mathcal{D}t} &= \frac{\rm d}{\rm dt} \mb c_i - \bm\kappa:\mathcal{E}_i 
\label{eq:c_tensor_def} \\
\mb H_i &= \frac{2k_BT}{\gamma} \mb H_{1,i} + \frac{2}{\gamma} \mb H_{2,i}.
\end{align}
\label{eq:c_evolution_ansatz}
\end{subequations}
Each term of \eqref{eq:c_evolution_ansatz} has a micro-scale correspondence, i.e., 
\begin{equation}
\begin{split}
&\mathcal{E}_i(\mb c_1, \cdots, \mb c_n) =  \left\langle \mb r\nabla_{\mb r}\otimes \mb B_i(\mb r) \right\rangle  \\
&\mb H_{1,i} (\mb c_1, \cdots, \mb c_n) = \left\langle \nabla^2_{\mb r} \mb B_i(\mb r) \right\rangle  \\
&\mb H_{2,i} (\mb c_1, \cdots, \mb c_n) = \left\langle \nabla V_{\rm b}(r) \cdot \nabla_{\mb r} \mb B_i(\mb r) \right\rangle,
\end{split}
\label{eq:E_H_DNN}
\end{equation}
where $\mathcal{E}_i$ is a $4\mhyphen$th order tensor function and $\mb H_{1,i}$, $\mb H_{2,i}$ are $2$nd order 
tensor functions. They will be approximately represented by DNNs. To collect the training data, we use microscale simulations to evaluate
these terms; no time-series samples are needed.

Note that  $\mathcal{D}\mb c_i/\mathcal{D}t$ depends on  $\mathcal{E}_i$, which 
encodes some  micro-scale information from $\mb B_i(\mb r)$. 
Different from the common choices of the objective tensor derivatives in existing models, it takes a more
general formulation and has a \emph{clear} micro-scale interpretation without the conventional 
ambiguities. It recovers the standard upper-convected derivative under
special case. To the best of our knowledge, this is the first study that
establishes such a direct micro-scale linkage for the objective tensor derivative in the non-Newtonian 
fluid modeling. The present form is different from the common choices
of the objective tensor derivatives in existing models. As shown later, such a formulation 
that faithfully accounts for the micro-scale polymer configuration 
is crucial for the accuracy of the constitutive model for $\mb c_i$. 

%


\subsection{Symmetry-preserving DNN representations}
Special rotation-symmetry-preserving DNNs are needed for 
the encoder functions $\left\{\mb f_i\right\}_{i=1}^{n}$, the $2$nd order 
tensors $\mb G$ and $\left\{\mb H_{1,i}, \mb H_{2,i}\right\}_{i=1}^n$ and the $4$th order tensors $\left\{\mathcal{E}_i\right\}_{i=1}^{n}$ such that
the symmetry conditions \eqref{eq:rotation_B_G_H} and \eqref{eq:rotation_c_evolution} are satisfied. 
For Eq. \eqref{eq:rotation_B} to hold, one can show that $\mb f_i(\mb r)$ has to take the form 
\begin{equation}
\mb f_i(\mb r) = g_i(r)\mb r,
\label{eq:encoder_DNN}  
\end{equation}
where $g_i(r)$ is a scalar encoder function (see Appendix).
We always set $g_1(r) \equiv 1$, yielding $\mb G \propto \mb H_{2,1}$.

To construct the DNNs for $\mb G$ and $\left\{\mb H_{1,i}, \mb H_{2,i}\right\}_{i=1}^n$ that satisfy
Eq. \eqref{eq:rotation_G_H}, we can transform $\left\{\mb c_i\right\}_{i=1}^n$ 
into a fixed frame for the DNN input. 
One natural choice is the eigen-space of the conformation tensor $\mb c_1= \langle \mb r\mb r^T\rangle$.
Let $\mb V$ be the matrix composed of the eigenvectors of $\mb c_1$.
Define
\begin{equation}
\begin{split}
&\mb H_{j,i}(\mb c_1, \cdots \mb c_n) = \mb V \widehat{\mb H}_{j,i}(\widehat{\mb c}_1, \cdots \widehat{\mb c}_n) \mb V^T\\
&\widehat{\mb c}_i = \mb V^T \mb c_i \mb V \quad j = 1,2~~i = 1, \cdots, n, 
\end{split}
\label{eq:G_H_ansatz}
\end{equation}
$\widehat{\mb c}_1$ is a diagonal matrix composed of the eigenvalues of $\mb c_1$.
The DNNs will be constructed to learn $\widehat{\mb H}_{j,i}$.

The learning of the $4$th order tensors $\left\{\mathcal{E}_i\right\}_{i=1}^{n}$ is based on the following decomposition:
\begin{equation}
\begin{split}
&\mathcal{E}_i(\mb c_1,\cdots, \mb c_n)
= \left\langle g_i(r)^2 \mb r\nabla_{\color{black}{\mb r}} 
\otimes
\color{black}{\mb r} \color{black}{\mb r^T} \right\rangle \\
&+ \sum_{k=1}^6 \mb E_{1,i}^{(k)} (\mb c_1, \cdots, \mb c_n) \otimes \mb E_{2,i}^{(k)} (\mb c_1, \cdots, \mb c_n).
\end{split}
\label{eq:E_tensor_ansatz}
\end{equation}
$\mb E_{1,i}, \mb E_{2,i} \in \mathbb{R}^{3\times3}$ are second-order tensors satisfying the rotational symmetry condition similar to 
Eq. \eqref{eq:rotation_G_H}, i.e., 
\begin{equation}
\begin{split}
&\mb E_{1,i}(\tilde{\mb c}_1, \cdots, \tilde{\mb c}_n) = \mb Q \mb E_{1,i} (\mb c_1, \cdots, \mb c_n) \mb Q^T \\
&\mb E_{2,i}(\tilde{\mb c}_1, \cdots, \tilde{\mb c}_n) = \mb Q \mb E_{2,i} (\mb c_1, \cdots, \mb c_n) \mb Q^T.
\end{split}
\label{eq:E_F_symmetry}
\end{equation}
It is shown in Appendix that this decomposition satisfies Eq. \eqref{eq:rotation_c_evolution}.
Accordingly, $\mathcal{E}_i$ can be constructed by a set of second order tensors $\mb E_{1,i}$ and $\mb E_{2,i}$, which can be
constructed similar to Eq. \eqref{eq:G_H_ansatz}.
Note that with this form, the first term in the RHS of Eq. \eqref{eq:E_tensor_ansatz} becomes
 $\bm\kappa \mb c_i + \mb c_i \bm\kappa ^T$ similar to the upper-convected derivative.   
In summary, the DNNs are designed to parametrize
$\{g_i(r)\}_{i=2}^n, \{\widehat{\mb H}_{1,i}, \widehat{\mb H}_{2,i}, \widehat{\mathcal{E}}_i \}_{i=1}^n$. 

Finally, the DNNs are trained by minimizing the loss
\begin{equation}
L = \lambda_{H_1} L_{H_1} + \lambda_{H_2} L_{H_2} + \lambda_{\mathcal{E}} L_{\mathcal{E}},
\end{equation}
where $L_{H_1}$, $L_{H_2}$ and $L_{\mathcal{E}}$ are the empirical risk associated with
$\left\{\mb H_{1,i}\right\}_{i=1}^n$, $\left\{\mb H_{2,i}\right\}_{i=1}^n$ and $\left\{\mathcal{E}_i\right\}_{i=1}^n$ 
respectively.    
$\lambda_{H_1}$, $\lambda_{H_2}$ and $\lambda_{\mathcal{E}}$ are hyper-parameters (see Appendix). Note that the 
encoders $\left\{g_i(r)\right\}_{i=2}^n$ do not explicitly appear in 
$L$; they are trained through the learning of $\widehat{\mb H}$ and $\mathcal{E}$. 

\subsection{DeePN$^2$}
The DeePN$^2$ model is made up of 
Eqs. \eqref{eq:momentum_transport_close}, \eqref{eq:moment_model} and \eqref{eq:c_evolution_ansatz}.
Note that the model takes the form of classical empirical models. The only differences are that some new conformation tensors
and a new form of objective tensor derivative are introduced, and some of the equation terms are represented as 
function subroutines in the form of NN models.  The latter is no different from the situation commonly found in
gas dynamics \cite{Molecular_gas_bird_1994}, where the equations of state 
are given as tables or function subroutines.
Also, we note that such conformation tensors are learned from the micro-scale simulations for the best approximation
of the polymer stress and constitutive dynamics. This allows us to bypass the evaluation of the polymer configuration
distribution by directly solving the high-dimensional FK equation, or coupling the micro-scale simulations. Meanwhile, the 
micro-scale viscoelastic effects can be faithfully captured beyond the empirical closures based on the linear/mean-field approximations. 

\section{Numerical results}
To demonstrate the model accuracy, we consider a polymer solution with polymer number density $n_p = 0.5$. The bond 
potential $V_{\rm b}(r)$ is chosen to be FENE, i.e., 
$V_{\rm b}(r)
=-\frac{k_s}{2}r^2_{0}
\log\left[ 1-\frac{r^2}{r^2_{0}}\right]$,
where $k_s$ is the spring constant.  $r_0$ is the maximum bond extension.
The continuum model is constructed using $n=3$ encoder functions.
We also experimented with larger values of $n$ but did not
see appreciable improvement. 
That been said, the choice of $n$ 
needs to be looked into more carefully in the future.

\begin{figure*}
\centering
\includegraphics[trim=60 20 100 70,clip,scale=0.21]{./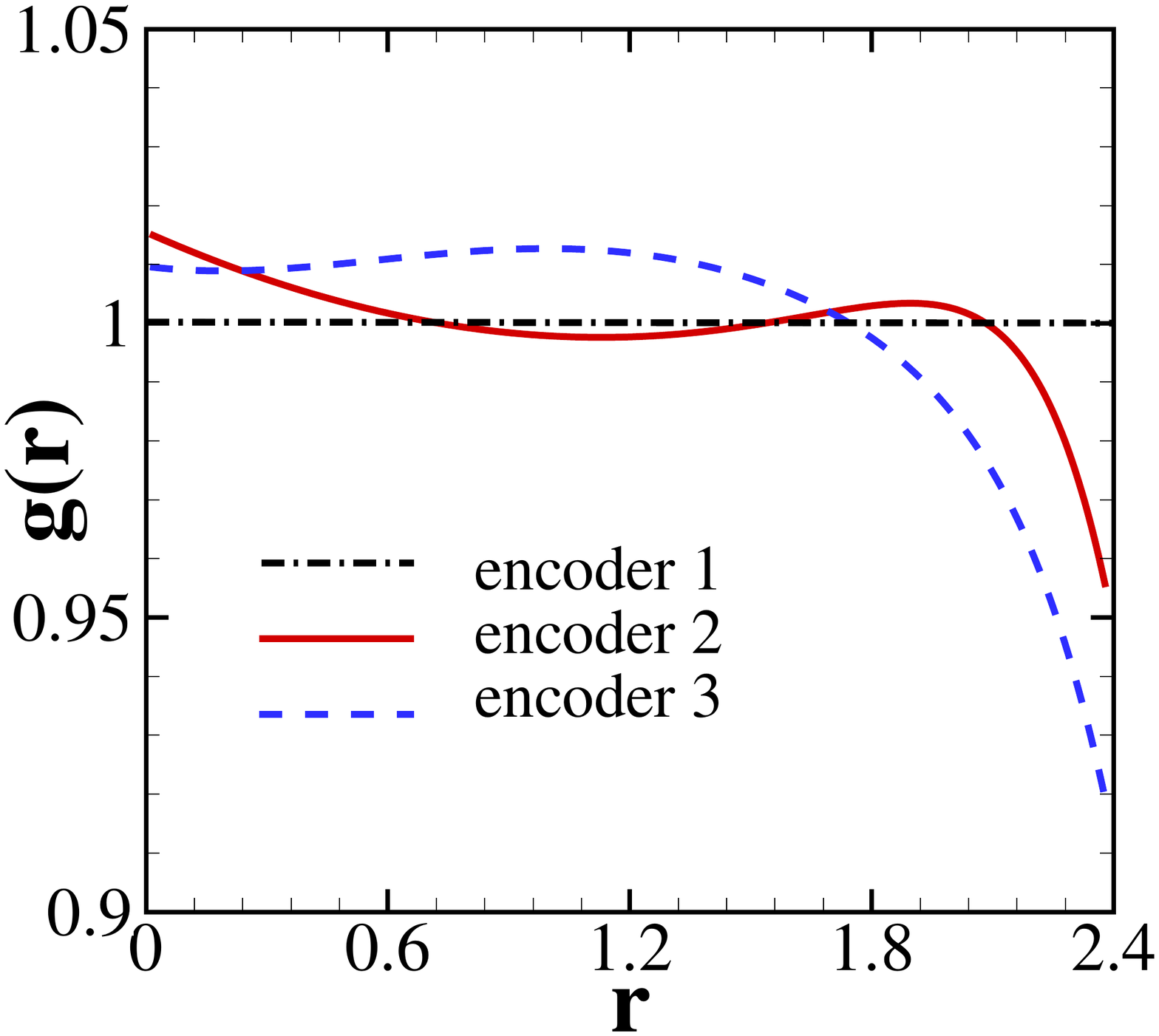}
\includegraphics[trim=60 20 100 70,clip,scale=0.21]{./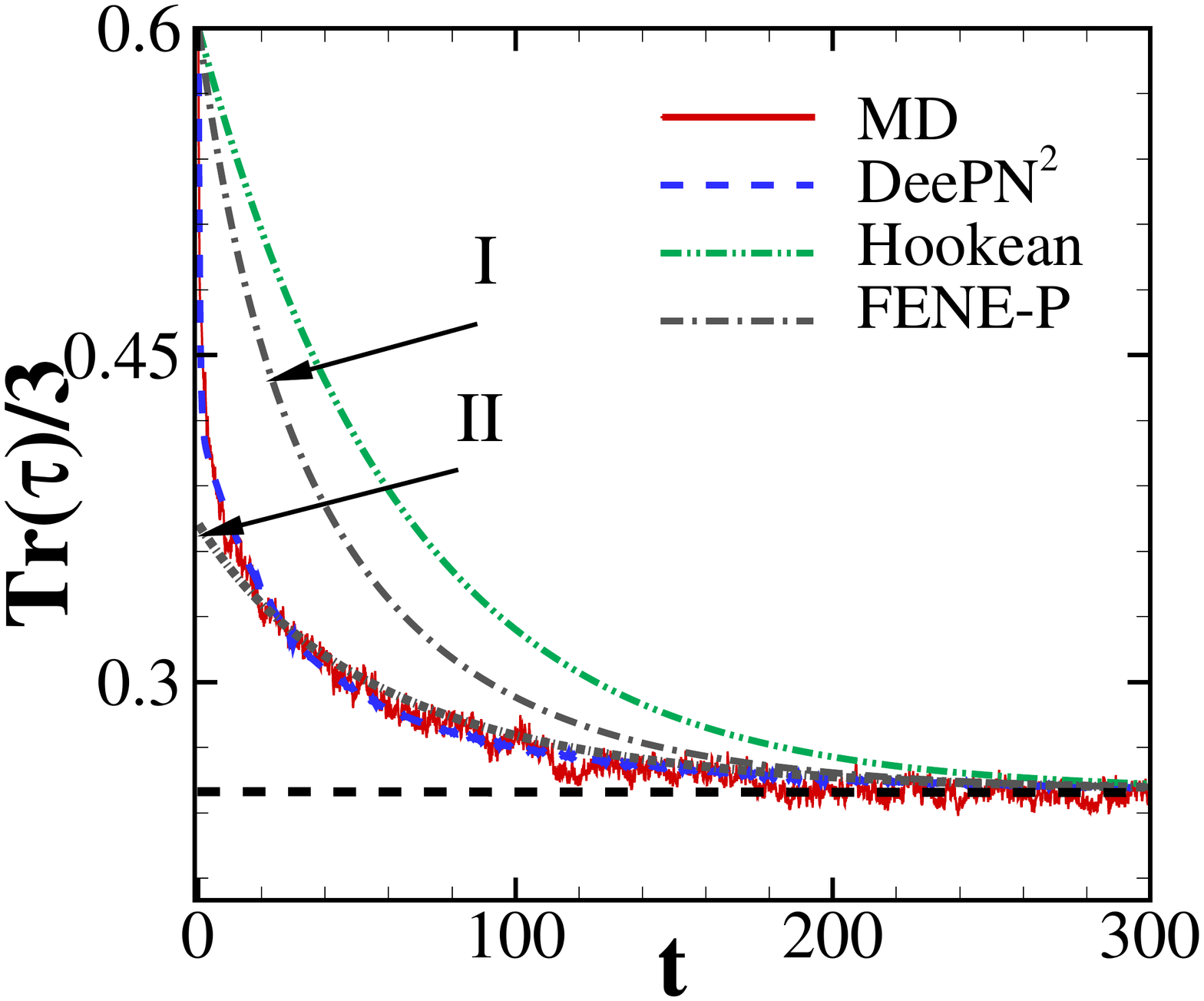}
\includegraphics[trim=60 20 100 70,clip,scale=0.21]{./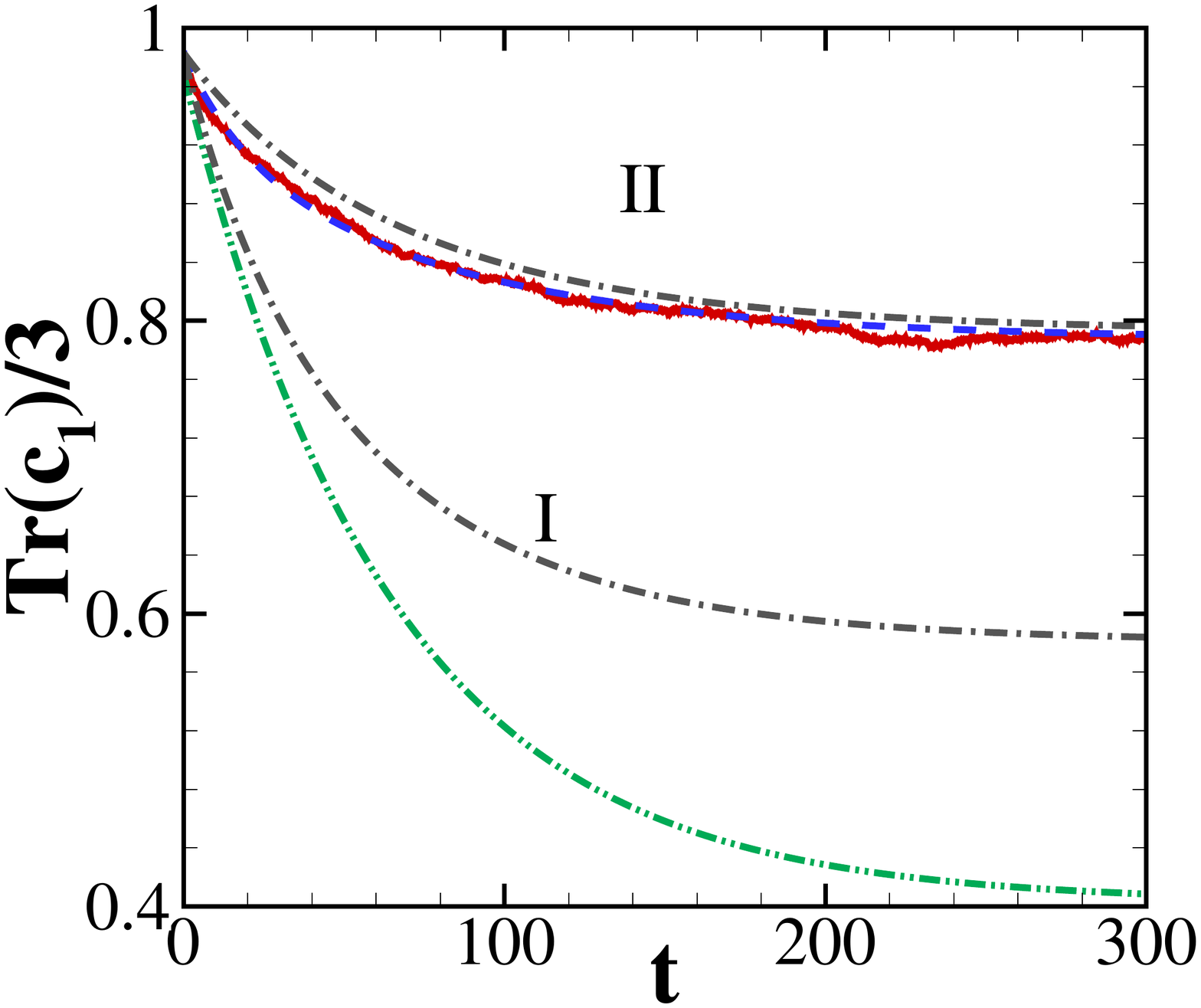}
\caption{Quasi-equilibrium relaxation process of a dumbbell suspension obtained from direct MD simulation, the present 
DeePN$^2$, canonical 
Hookean and FENE-P model. \textbf{Left}: Encoder function $g(r)$. 
\textbf{Middle}: Evolution of $\bm \tau_{\rm p}$. \textbf{Right}: Evolution of $\mb c_1 = \left\langle \mb r\mb r^T\right\rangle$.
Two parameter sets of 
the FENE-P model are examined: (\Rmnum{1}) the initial conditions of both $\mb c_1$ 
and $\bm\tau_{\rm p}$  are consistent with MD. (\Rmnum{2}) the initial and final conditions
of $\mb c_1$ are consistent with MD. The Hookean model parameters are 
set following (\Rmnum{1}).
}
\label{fig:quasi_equilibrium_MD_ML}
\end{figure*}

Fig. \ref{fig:quasi_equilibrium_MD_ML} shows the encoder functions  $g(r)$  with $r_0 = 2.4$, $k_s = 0.1$. 
To validate the DeePN$^2$ model, we consider a quasi-equilibrium dynamics of the polymer solution with
$k_B T = 0.25$, while the initial polymer configuration is taken from the equilibrium state with $k_B T = 0.6$. The
relaxation process is simulated using both  MD and DeepN$^2$. Fig. \ref{fig:quasi_equilibrium_MD_ML} shows the evolution of 
the trace  of $\mb c_1$ and $\bm\tau_{\rm p}$. The predictions from DeePN$^2$ agree well with the MD results. 
In contrast, the predictions from Hookean and FENE-P model show apparent deviations.

Next, we consider the non-equilibrium process of a reverse Poiseuille flow (RPF) in a domain 
$[0, 40]\times[0, 80]\times [0, 40]$ (reduced unit), with periodic boundary condition imposed in each direction. 
Starting from $t = 0$, an external field $\mb f_{\rm ext} = (f_b, 0, 0)$ 
is applied in 
the region $y < 40$ and an opposite field $\mb f_{\rm ext} = (-f_b, 0, 0)$ is applied 
in the region $y > 40$. 
Fig. \ref{fig:velocity_profile_conformation} shows the instantaneous velocity profiles with $r_0 = 3.8$ and $f_b = 0.02$. The predictions from
DeePN$^2$ agree well with MD while FENE-P yields apparent deviations.
For the velocity evolution at 
$y = 6$ and $y = 14$, the predictions from the Hookean and FENE-P models show 
pronounced overestimations on both the magnitude and duration of the oscillation behavior. Such limitations of the FENE-P
model have already been noted in Ref. \cite{Laso_Ottinger_JNNFM_1993}. 
From the microscopic perspective, the discrepancy arises from the 
mean field approximation, $\bm\tau_{\rm p} \approx \mb c/(1 - \Tr(\mb c)/r_0^2)$. 
Such an approximation 
cannot capture the nonlinear response when individual polymer bond length approaches $r_0$.
In contrast, DeePN$^2$ 
can capture such micro-scale ``bond length dispersion'' via the additional macro-scale nonlinear conformation 
tensors $\mb c_2, \cdots, \mb c_n$.

\begin{figure}[htbp]
\includegraphics[trim=60 20 100 70,clip,scale=0.25]{./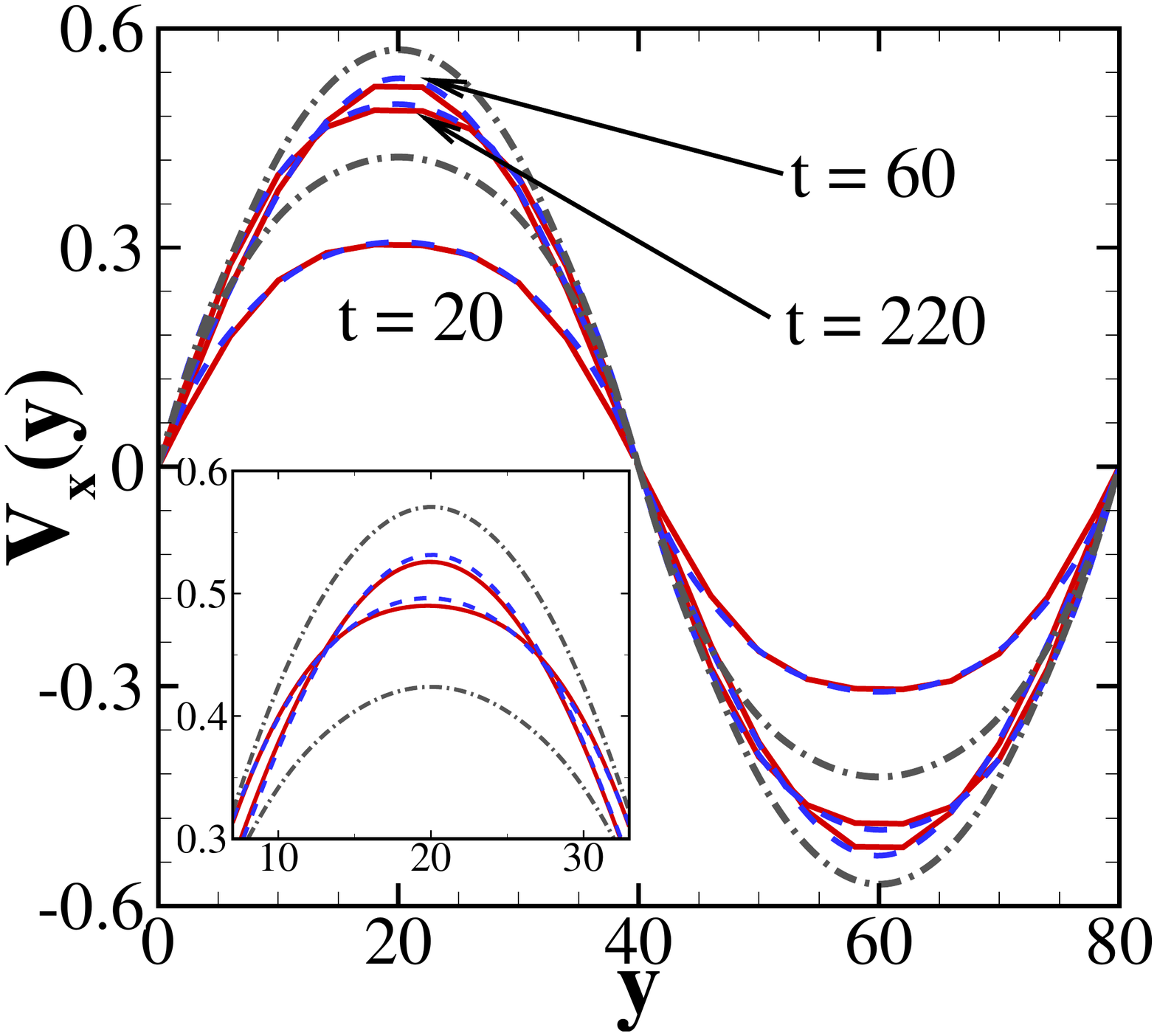}
\includegraphics[trim=60 20 100 70,clip,scale=0.25]{./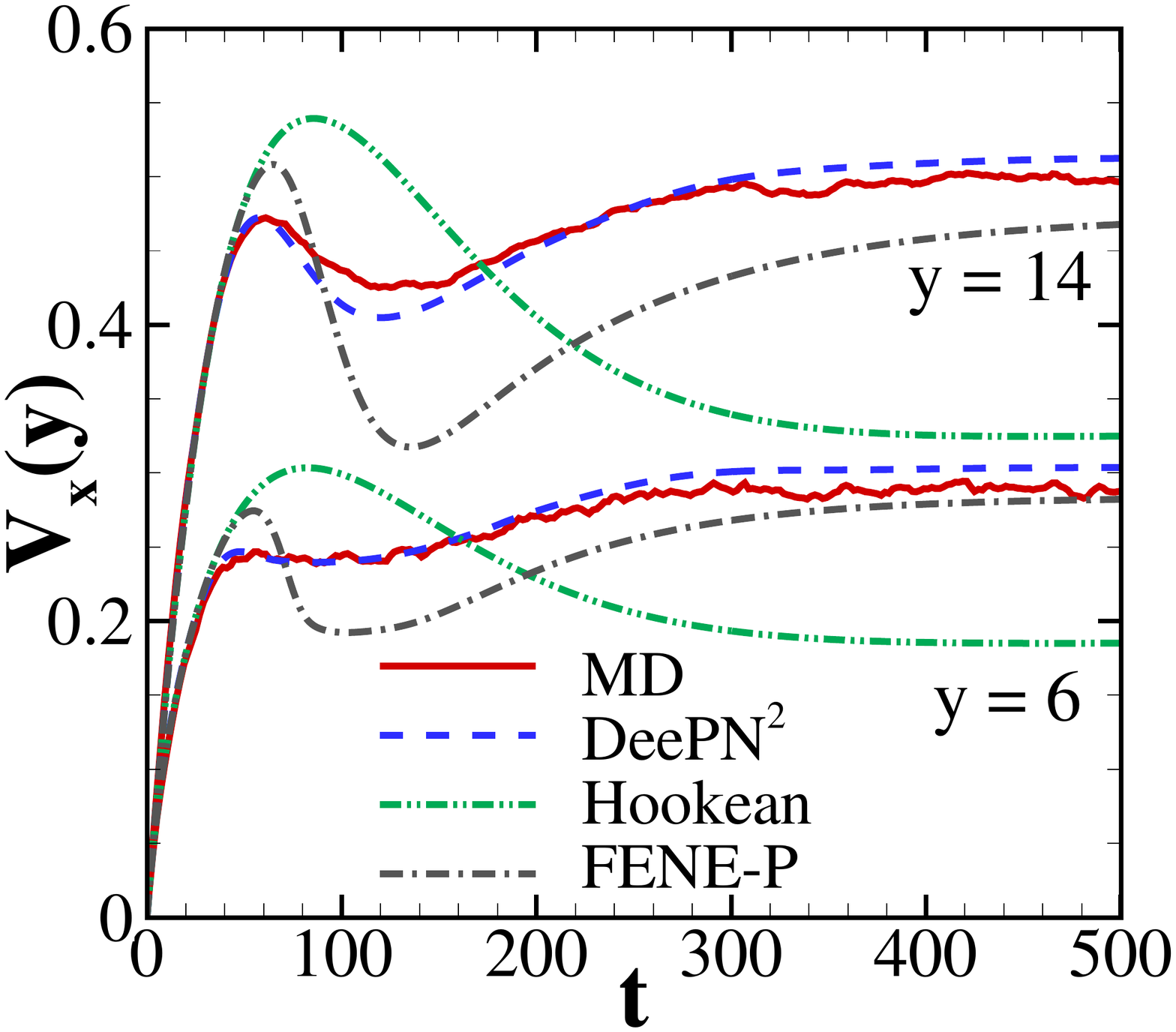}
\caption{Evolution of the reverse Poiseuille flow of a dumbbell suspension obtained from MD and various models. \textbf{Left}:
velocity profiles at $t = 20, 60, 220$. \textbf{Right}: velocity evolution at $y = 6$ and $y = 14$. 
The parameters of the Hookean and FENE-P model are chosen 
such that the equilibrium bond length matches the MD results. 
}
\label{fig:velocity_profile_conformation}
\end{figure}

Shown in  Fig. \ref{fig:micro_macro_link}(a) is the evolution of $\mb c_1$ at $y = 6$. 
The DeePN$^2$ faithfully  predicts the responses of the polymer configurations 
under the external flow field. The instantaneous $\bm \tau_{\rm p}$ is also accurately
predicted by the conformation tensors, 
as shown in Fig. \ref{fig:micro_macro_link}(b-c). 
The responses 
can also be examined by 
the shear-rate-dependent viscosity. As shown in Fig. \ref{fig:micro_macro_link}(d), predictions 
by DeePN$^2$ agree well with the MD results.  
In contrast, the 
FENE-P model 
yields apparent deviations. 

\begin{figure}[htbp]
\includegraphics[trim=60 20 100 20,clip,scale=0.25]{./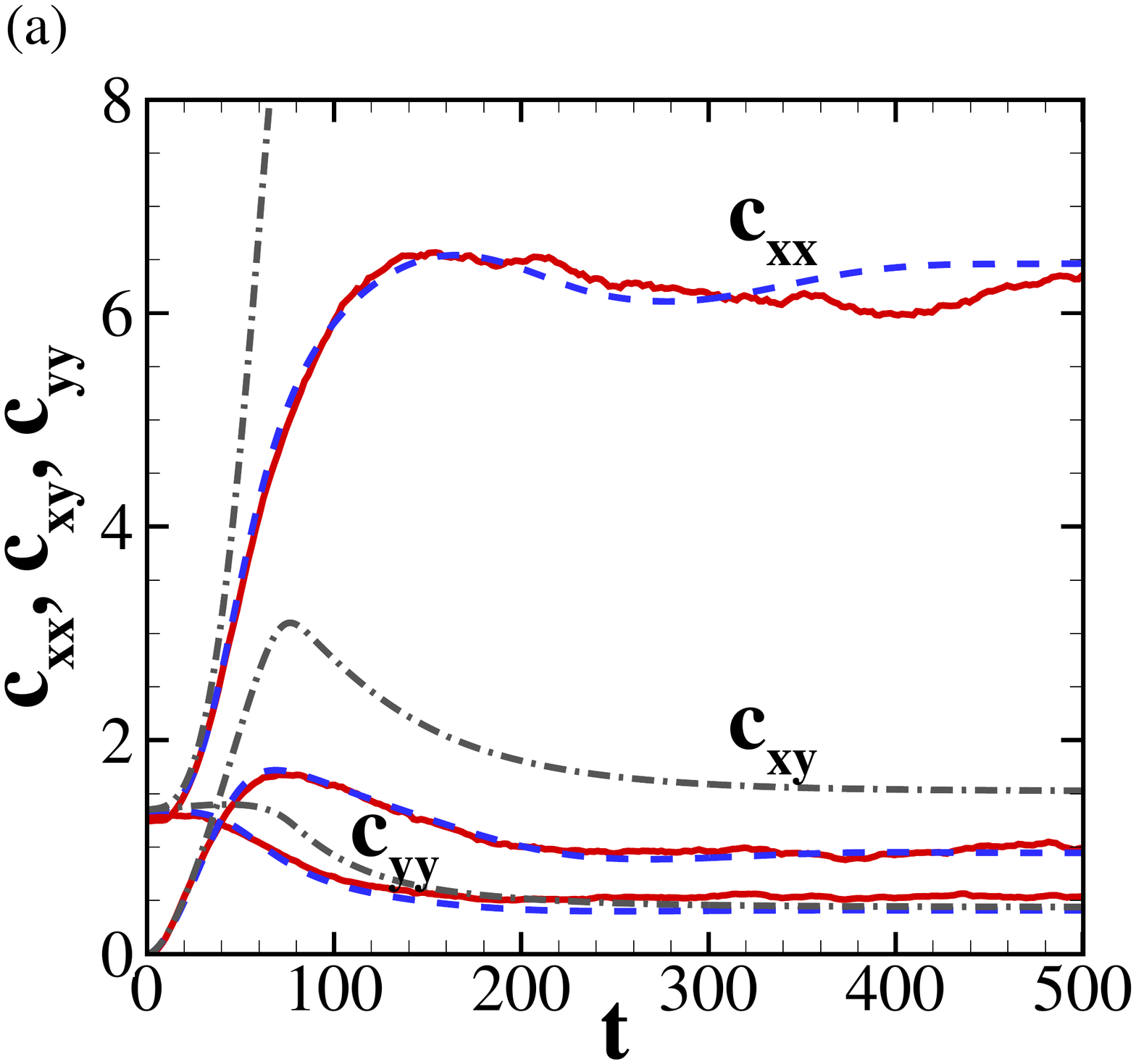}
\includegraphics[trim=60 20 100 20,clip,scale=0.25]{./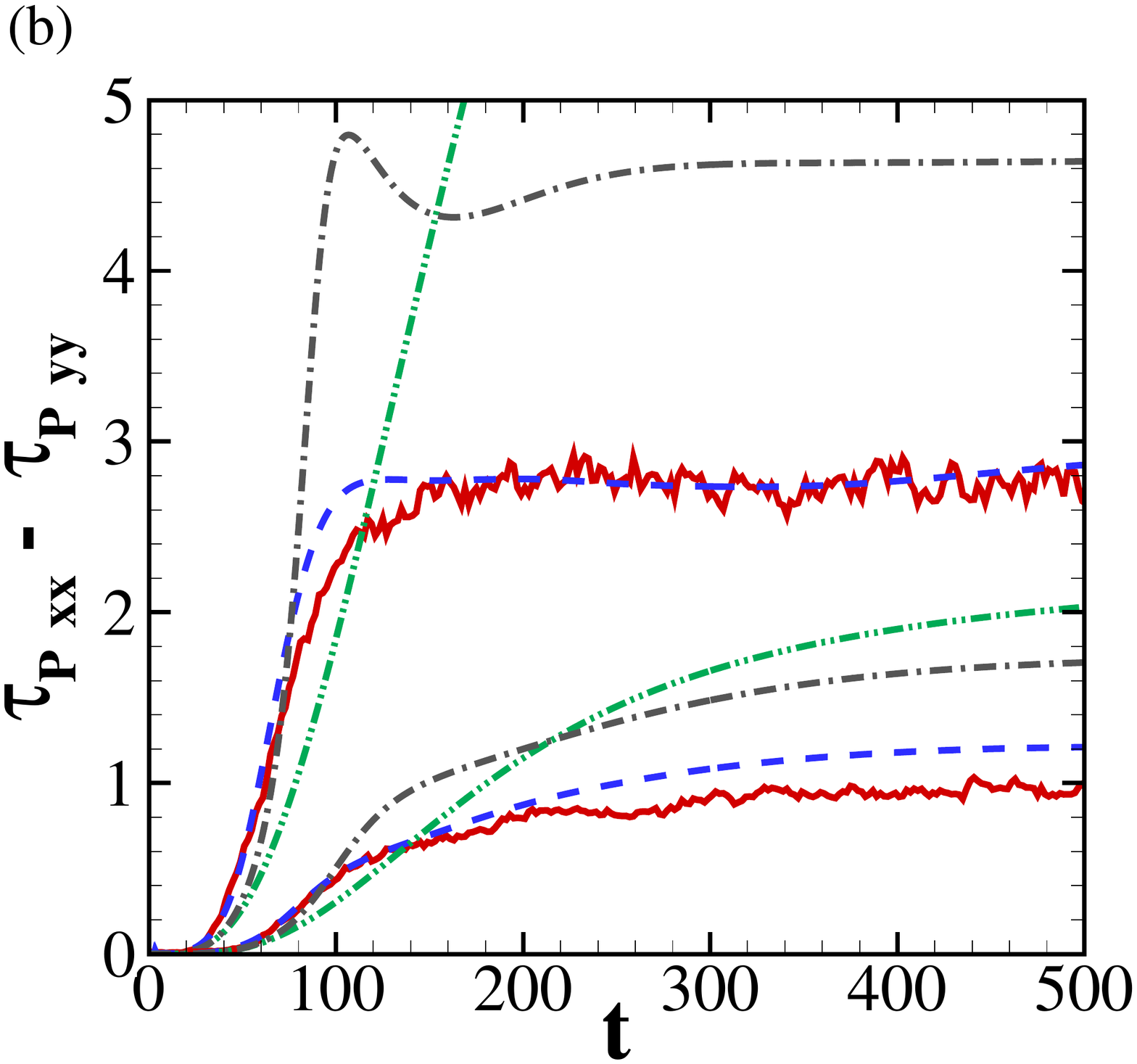}\\
\includegraphics[trim=60 20 100 20,clip,scale=0.25]{./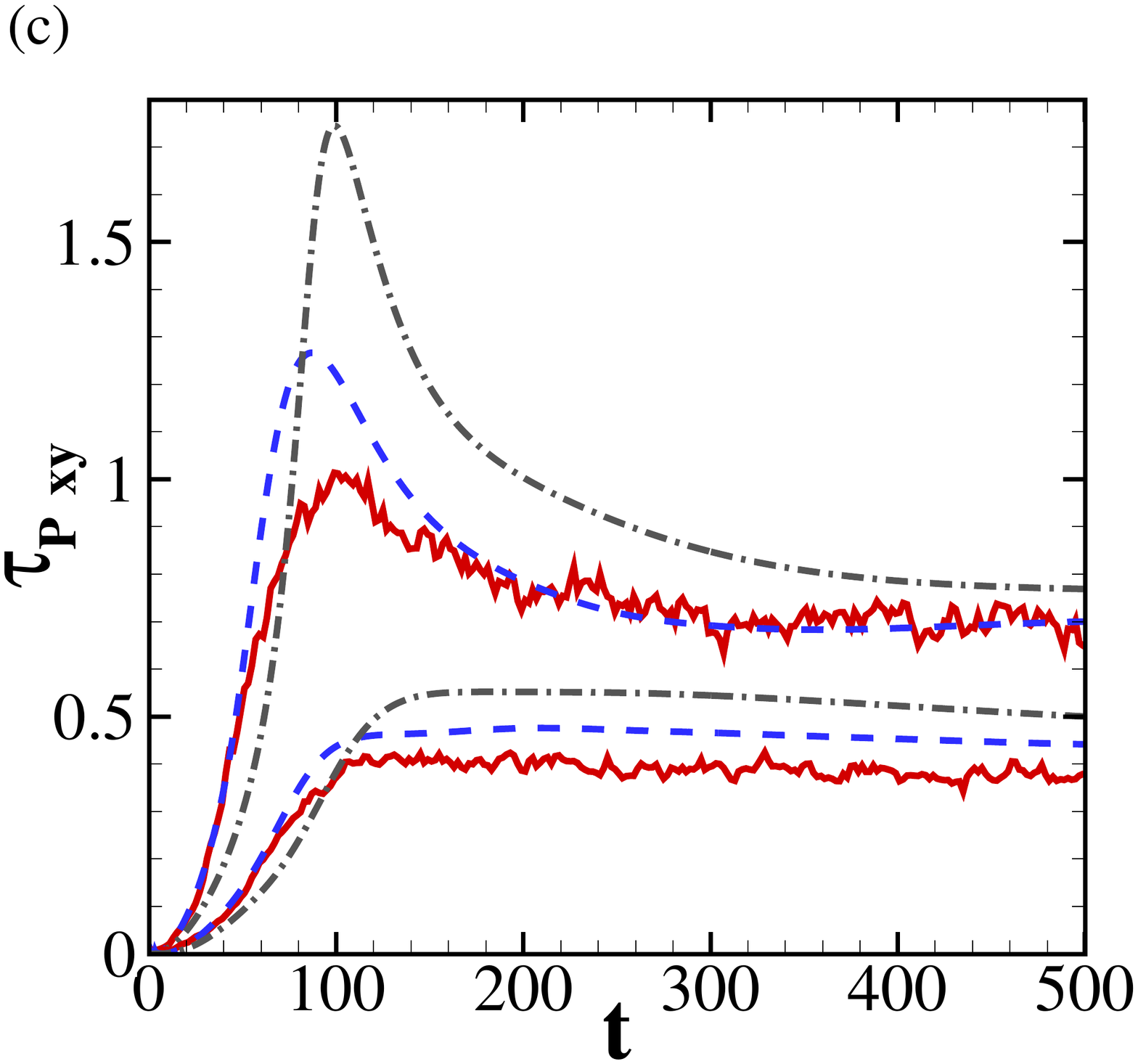} 
\includegraphics[trim=60 20 100 20,clip,scale=0.25]{./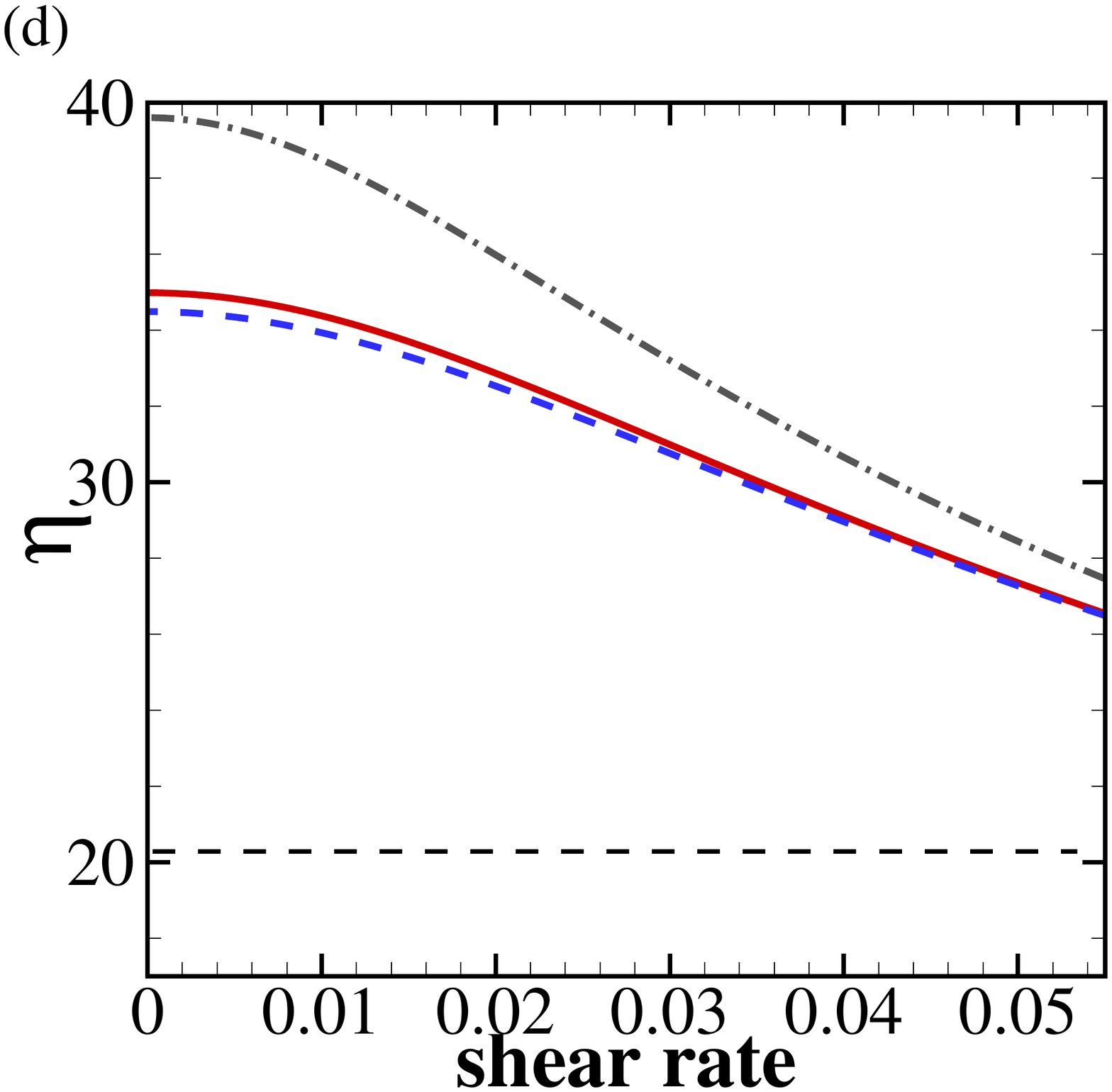}
\caption{The micro-macro correspondence during the evolution of the reverse Poiseuille flow 
of the dumbbell suspension presented in Fig. \ref{fig:velocity_profile_conformation} with the 
same line scheme. 
(a) evolution of $\mb c_1$ at $y = 6$. (b-c) 
Normal stress difference ${\bm \tau_{\rm p}}_{xx}- {\bm\tau_{\rm p}}_{yy}$ and shear stress ${\bm \tau_{\rm p}}_{xy}$ 
at $y = 6$ (upper lines) and $y = 14$ (lower lines). (d) Shear-rate-dependent viscosity. 
The predictions by Hookean model show large deviations from the MD results, and not shown in
(a), (c), (d) for visualization purpose.}
\label{fig:micro_macro_link}
\end{figure}

Besides the first-principle-based stress model and dynamic closure, another distinctive feature of the  
DeePN$^2$ model is the generalized objective tensor derivative $\mathcal{D}\mb c_i/\mathcal{D}t$: 
\begin{equation}
\frac{\mathcal{D}\mb c_i}{\mathcal{D}t} = \overset{\triangledown}{\mb c}_i - 
\bm\kappa:\left[\sum_{k=1}^6 \mb E_{1,i}^{(k)} (\mb c_1,\cdots,\mb c_n) \otimes \mb E_{2,i}^{(k)} (\mb c_1,\cdots,\mb c_n)\right],
\label{eq:tensor_derivative}
\end{equation}
where $\overset{\triangledown}{\mb c}_i$ is the standard upper-convected derivative and 
the second term arises from the source term 
$\left\langle \mb r \nabla_{\mb r} g(r)^2 \otimes \mb r\mb r^T \right\rangle$
in  Eq. \eqref{eq:FK_B_evoluation}.
Therefore, the second term of $\mathcal{D}\mb c_i/\mathcal{D}t$ is embedded with the nonlinear 
response to external field $\bm\kappa$, 
inherited from the encoder $g_i(r)$.  
As a numerical test, we use the present model to simulate the RPF, where  
$\mathcal{D}\mb c_i/\mathcal{D}t$ is chosen to be the upper-convected derivative $\overset{\triangledown}{\mb c}_i$ and 
other modeling terms remain the same. Fig. \ref{fig:corroational_derivative} shows the evolution of the velocities and $\mb c_1$. 
By ignoring the second term in Eq. \eqref{eq:tensor_derivative}, the predictions 
show apparent deviations from the MD results.
This  indicates that the empirical choices of the objective tensor derivative are not accurate.
To achieve the desired accuracy, these derivatives have to 
retain some information from the specific conformation tensor.

\begin{figure}[htbp]
\centering
\includegraphics[trim=60 20 100 20,clip,scale=0.25]{./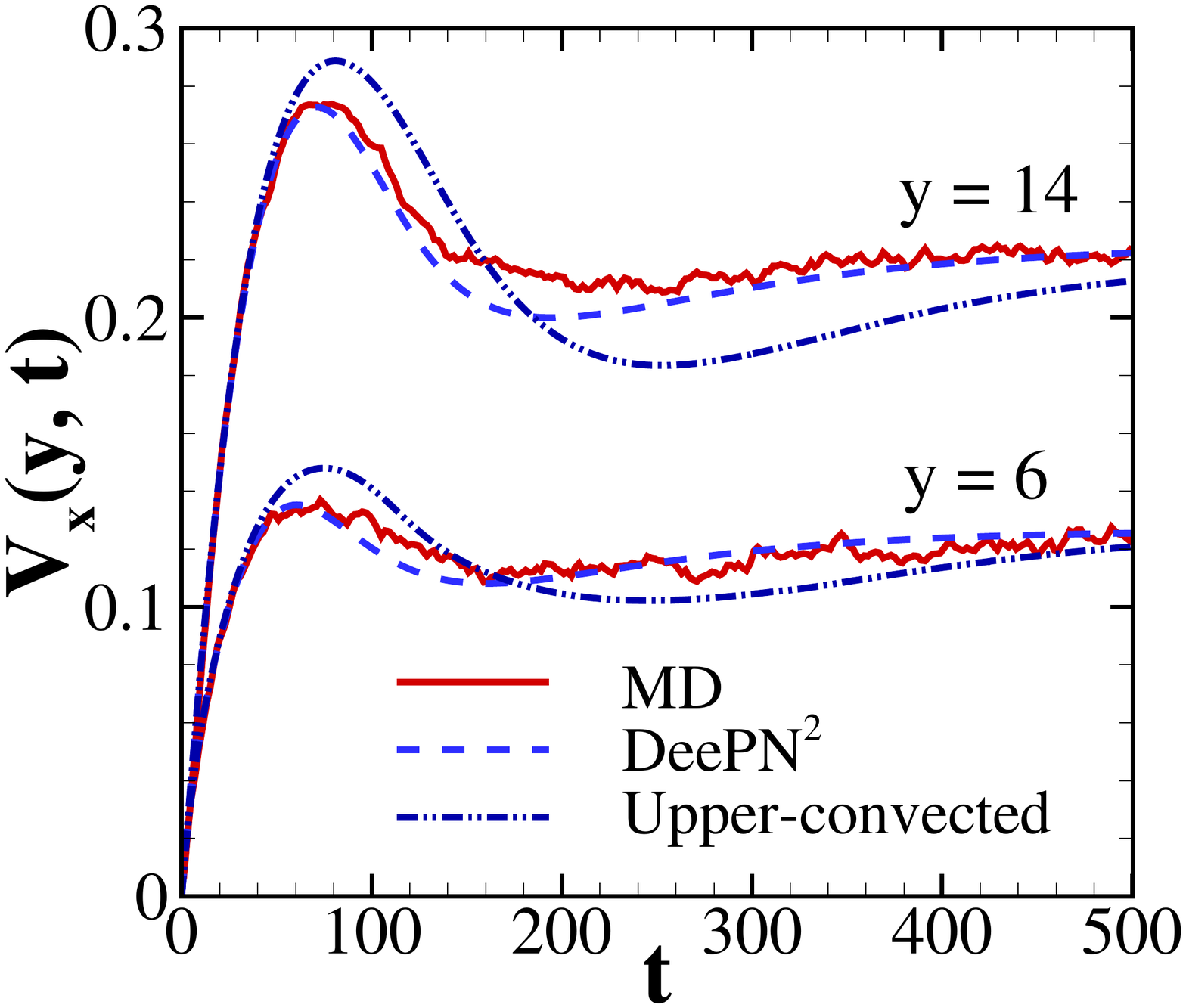}
\includegraphics[trim=60 20 100 20,clip,scale=0.25]{./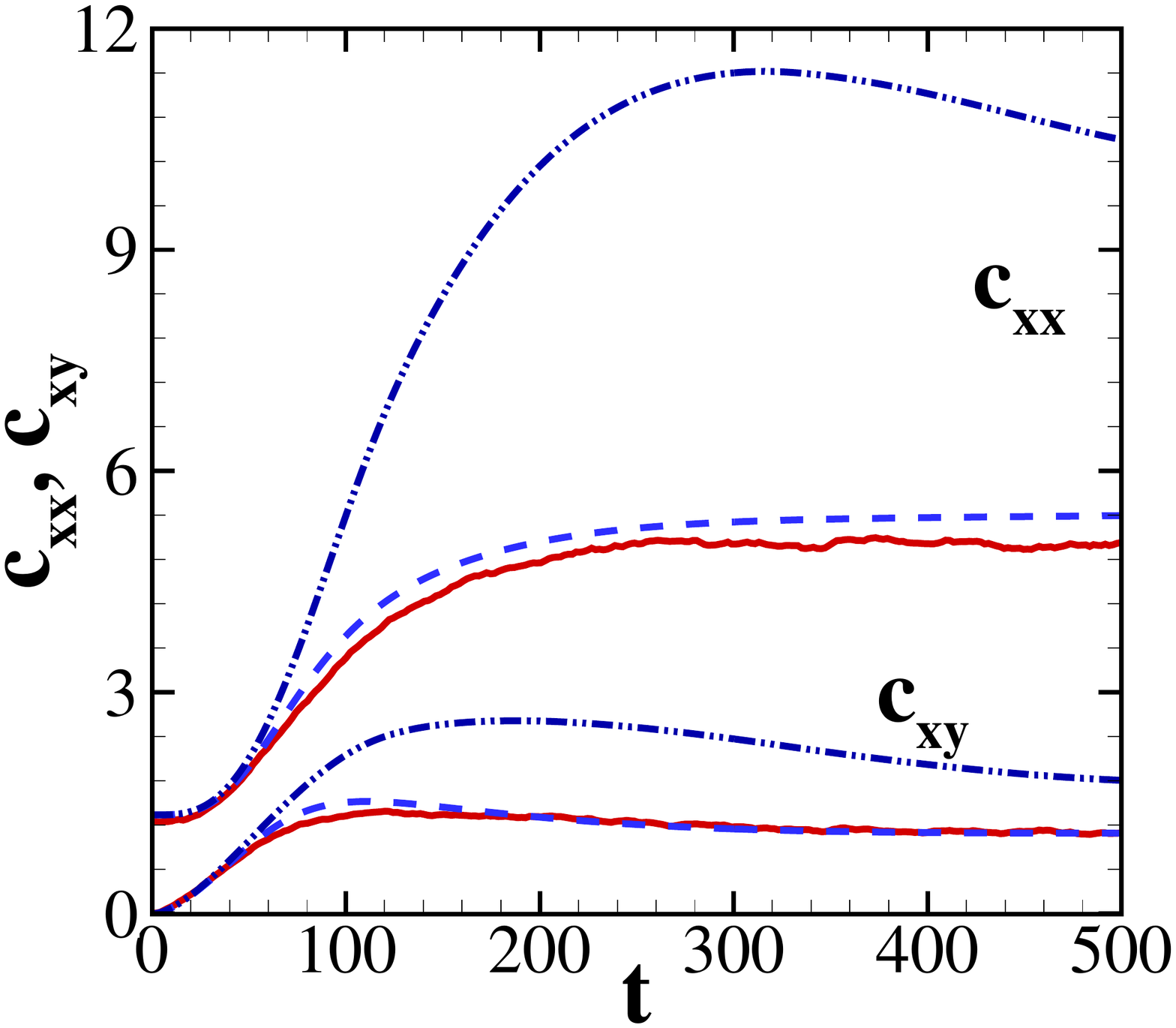}
\caption{The effectiveness of the objective tensor derivative constructed 
by \eqref{eq:tensor_derivative}.  
The additional source term plays a vital role in the accurate modeling of the fluid systems. 
The model that uses the canonical upper-convected
derivative shows apparent deviations from the 
MD results for the evolution of the velocities (\textbf{left}) and 
$\mb c_1$ (\textbf{right}) at $y = 6$.  
}
\label{fig:corroational_derivative}
\end{figure}

\section{Discussion}
The present DeePN$^2$ directly learns the stress model and constitutive dynamics from the microscale simulation data and
avoids dealing with the high-dimensional microscale configuration density function $\rho(\mb r, t)$. 
A main observation is that the explicit knowledge of $\rho(\mb r, t)$ is a sufficient, but not a necessary condition 
for constructing the full constitutive equation. We note that DeePN$^2$ differs from the previous
moment-closure studies \cite{Warner_IECF_1972, Warner_PhD_1971, Armstrong_JCP_1974_1} based on empirical approximations
of $\rho(\mb r, t)$. In these semi-analytical studies \cite{Warner_IECF_1972, Warner_PhD_1971, Armstrong_JCP_1974_1}, the steady-state
FK solution $\rho_{s}(\mb r)$ of a dumbbell is approximated by series expansion, yielding the stress-strain relationship
only for equilibrium \cite{Bird_Curtiss_book_vol_2}. In Ref. \cite{Du_Liu_MMS_2005, Yu_Du_mms_2005,
Hyon_Du_mms_2008}, a set of high-order moments are proposed to 
capture the peak regime of the $\rho(\mb r, t)$, yielding good predictions for the two-dimensional dumbbell system.
However, it is not straightforward to generalize such approximations for complex systems due to the lack
of general relationship between these moments and the stress tensor $\bm\tau_{\rm p}$. On the other hand, the
conformation tensors constructed in the present DeePN$^2$ \emph{are not} the standard moments for the approximation of $\rho(\mb r, t)$;
they are directly learnt from the micro-scale samples that best capture the dynamics of $\bm\tau_{\rm p}$, rather
than recover the high-dimensional $\rho(\mb r, t)$. As a numerical example, we 
employ DeePN$^2$ to a three-bead suspension with intramolecular interactions governed by 
both the bond and angle potentials, see Appendix. Generalization of the learning framework for complex polymer 
fluids will be conducted and presented in the following studies.

\section{Summary}
In this study, we presented a machine learning-based approach  for constructing 
hydrodynamic models for polymer fluids, DeePN$^2$, directly from the micro-scale descriptions. 
While this is only the first step in a long program, the results we obtained have already demonstrated the
potential of such an approach for achieving accuracy and efficiency at the same time.
The construction is based on an underlying micro-scale model.
It respects the symmetries of the underlying physical system.
It is end-to-end, and requires little ad hoc human intervention.
Contrary to conventional wisdom on machine learning models,
the model obtained here is quite interpretable, and in fact shows quite some physical insight.
It has already demonstrated much better accuracy than existing hydrodynamic models in several tests.

Different from the common ML-based approaches for learning the reduced dynamics of complex systems, the 
present approach does not require the time-series samples and provides a generalized form of the objective
tensor derivative with \emph{clear} micro-scale interpretation. This enables us to avoid
the heuristic choices on the objective tensor derivative and the ``black-box'' representations by the 
numerical evaluation of the time-derivatives. These unique features are well-suited for the multi-scale fluid
systems where accurate time-series samples from the micro-scale simulations are often limited.  
While we focused on polymer solutions, the new form of the objective tensor derivative and the 
present learning framework are quite general and can be adapted to other systems of complex fluids and soft matter. 

It should also be noted that what
we discussed is only a first step towards constructing accurate and robust hydrodynamic models for non-Newtonian fluids.
Admittedly, dumbbell suspensions are polymer models with simplified
intramolecular potential and viscoelasticity, applications 
to more realistic micro-scale models will be carried out in  future work.
Among the other issues that remain to be addressed, let us mention coupling the training process with the 
adaptive selection of the training data as was done in MD \cite{Zhang_Lin_PRM_2019},
the automatic choice of the model complexity (e.g. the choice of $n$), the improvement of the underlying
micro-scale model \cite{Lei_Li_PNAS_2016}, and the enhancement of the micro-scale sampling efficiency.  
While some of these will take time, there is no doubt that machine learning, used in the right way,
can help us to tackle the long-standing problem of developing truly reliable hydrodynamic models for complex fluids.

\clearpage
\appendix

\section{Rotational symmetry of the model ansatz and the DNN representation}
In this section, we show that both the modeling ansatz 
and the DNN 
representation 
of the DeePN$^2$ model satisfy the rotational invariance condition. 

\subsection{Rotational invariance from the continuum and microscopic perspective}
Let us consider a symmetric tensor $\mb c \in \mathbb{R}^{3\times 3}$ in two different coordinate 
frames. Frame $1$ is a static inertial frame. We let $\tilde{\mb x}$, 
$\tilde{\mb v} := \tilde{\mb v}(\tilde{\mb x}, t)$, $\tilde{\mb c} := \tilde{\mb c}(\tilde{\mb x}, t)$, 
the position, velocity and $\mb c$ in frame $1$. Framework $2$ is a rotated
frame which is related to frame $1$ by a unitary matrix $\mb{Q}(t)$. We denote $\mb x$, 
$\mb v(\mb x, t)$, $\mb c(\mb x, t)$ the position, velocity and $\mb c$ in 
frame $2$. Accordingly, $\mb x$, $\mb c$ and $\mb v$ follows the transformation rule
\begin{equation}
\begin{split}
&\tilde{\mb x} = \mb Q \mb x \\
&\tilde{\mb v} = \mb Q \mb v(\mb x, t) + \dot{\mb Q}\mb Q^T \tilde{\mb x} \\
&\tilde{\mb c} = \mb Q \mb c(\mb x, t)\mb Q^T.
\end{split}
\label{eq:rotation_transform}
\end{equation}
To construct the dynamics of $\mb c$, we need to choose an objective derivative
$\displaystyle \frac{\mathcal{D} \mb c}{\mathcal{D} t}$ which retains proper rotational symmetry, i.e.,
\begin{equation}
\frac{\mathcal{D} \tilde{\mb c} }{\mathcal{D} t} \big\vert_{\rm frame 1} = 
\mb Q(t) \frac{\mathcal{D} \mb c }{\mathcal{D} t} \big\vert_{\rm frame 2} \mb Q(t)^T.
\label{eq:Der_rot_symm_1}
\end{equation}
For introduction, we choose $\mathcal{D}/\mathcal{D} t$ to be the material derivative of the vector form, i.e., 
$\displaystyle \frac{\rm d}{\rm d t} := \frac{\partial}{\partial t} + \mb v \cdot \nabla$. Accordingly,
Eq. \eqref{eq:Der_rot_symm_1} cannot be satisfied, since 
\begin{equation}
\begin{split}
\frac{\mathcal{D} \tilde{\mb c} }{\mathcal{D} t} \big\vert_{\rm frame 1} 
&= \dot{\mb Q} \mb c \mb Q^T + \mb Q \mb c \dot{\mb Q}^T + \mb Q \frac{\diff \mb c}{\diff t}
\big\vert_{\rm frame~1}  \mb Q^T  \\
&= \dot{\mb Q} \mb c \mb Q^T + \mb Q \mb c \dot{\mb Q}^T + \mb Q \frac{\mathcal{D} \mb c}{\mathcal{D} t}
\big\vert_{\rm frame~2}
\mb Q^T.
\end{split}
\label{eq:Der_rot_symm_2}
\end{equation}
Compared with Eq. \eqref{eq:Der_rot_symm_1}, two additional terms appear in the last equation.
The second identity follows from
\begin{equation}
\begin{split}
\frac{\rm d \mb c}{\rm d t} \big\vert_{\rm frame~1} &= \frac{\partial \mb c(\mb Q^T \tilde{\mb x}, t)}{\partial t} 
+ \tilde{\mb v}\cdot \nabla_{\tilde{\mb x}} \mb c(\mb Q^T \tilde{\mb x}, t)\\
&= \frac{\partial \mb c(\mb x, t)}{\partial t} 
    + \left(\dot{\mb Q}^T \tilde{\mb x} \cdot \nabla_{\mb x}\right) \mb c 
+ \left(\mb Q \mb v(\mb x, t)  + 
        \dot{\mb Q}\mb x\right)\cdot\nabla_{\tilde{\mb x}} \mb c \\
&= \frac{\partial \mb c(\mb x, t)}{\partial t} + \mb v(\mb x, t)\cdot \nabla_{\mb x} \mb c \\
&= \frac{\rm d \mb c}{\rm d t} \big\vert_{\rm frame~2}.
\end{split}
\nonumber
\end{equation}

Alternatively, if we choose $\mathcal{D}/\mathcal{D} t$ to be the objective tensor derivatives 
coupled with $\bm\kappa := \left(\nabla \mb v\right)^T$, e.g., the upper-convected
$\overset{\triangledown}{\mb c} = \frac{\rm d \mb c}{\rm d t} -  \bm\kappa \mb c - \mb c \bm\kappa^T$,  
covariant derivative  
$\overset{\vartriangle}{\mb c} = \frac{\rm d \mb c}{\rm d t} +  \bm\kappa^T \mb c + \mb c \bm\kappa$,
the Jaumann derivative 
$\overset{\circ}{\mb c} = \frac{1}{2}(\overset{\triangledown}{\mb c} + \overset{\vartriangle}{\mb c})$, 
Eq. \eqref{eq:Der_rot_symm_1} is satisfied. For example,  
\begin{equation}
\begin{split}
 \overset{\triangledown}{\mb c} \big\vert_{\rm frame~1} &=  
\dot{\mb Q} \mb c \mb Q^T + \mb Q \mb c \dot{\mb Q}^T + \mb Q 
\frac{\rm d \mb c}{\rm d t} \big\vert_{\rm frame~2} \mb Q^T  \\
& - (\mb Q\bm\kappa\mb Q^T + \dot{\mb Q}\mb Q^T)\mb Q\mb c \mb Q^T
- \mb Q\mb c \mb Q^T (\mb Q\bm\kappa^T\mb Q^T +\mb Q \dot{\mb Q}^T) \\
&= \mb Q \frac{\rm d \mb c}{\rm d t} \big\vert_{\rm frame~2} \mb Q^T 
- \mb Q \bm\kappa \mb c \mb Q^T - \mb Q^T \mb c \bm\kappa^T \mb Q \\
&= \mb Q   \overset{\triangledown}{\mb c} \big\vert_{\rm frame~2} \mb Q^T. 
\end{split}
\nonumber
\end{equation}

On the other hand,  this analysis does not provide us concrete guidance to construct
$\frac{\mathcal{D} \mb c}{\mathcal{D} t}$, since multiple choices such as $\overset{\triangledown}{\mb c}$,
$\overset{\vartriangle}{\mb c}$ and $\overset{\circ}{\mb c}$ all satisfy Eq. \eqref{eq:Der_rot_symm_1}.
To address this issue, we look for a micro-scale perspective based on  the Fokker-Planck equation
to understand the rotational invariance and construct $\frac{\mathcal{D} \mb c}{\mathcal{D} t}$.

Let us consider the Fokker-Planck equation
of a dumb-bell polymer with end-end vector $\mb r$ coupled with flow field $\mb v$. 
By ignoring the external field, the evolution of the density $\rho(\mb r, t)$ is governed by 
\begin{equation}
\frac{\partial \rho(\mb r, t)}{\partial t} = -\nabla\cdot
\left[(\bm\kappa\cdot\mb r)\rho - \frac{2k_BT}{\gamma}\nabla\rho 
- \frac{2}{\gamma}\nabla V_{\rm b} \rho\right],
\label{eq:FK_dumbbell_append} 
\end{equation}
where $\gamma$ is the friction coefficient of the solvent, $V_{\rm b}(r)$ is the intra-molecule
potential energy. 
\begin{proposition}
Eq. \eqref{eq:FK_dumbbell_append} retains rotational invariance under  the transformation by Eq. \eqref{eq:rotation_transform},
i.e.
\begin{equation}
\label{FK-invariance}
\displaystyle \tilde{\rho} := \rho\left(\tilde{\mb r}, t\right)
\big\vert_{\rm frame ~1} \equiv \rho\left(\mb r, t\right)\big\vert_{\rm frame ~2}.
\nonumber
\end{equation}

\end{proposition}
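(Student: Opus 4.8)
\emph{Proof proposal.} The plan is to substitute the frame-1 versions of all fields into Eq.~\eqref{eq:FK_dumbbell_append} and check that it collapses to the frame-2 equation, which holds by hypothesis. Write $\mb r=\mb Q(t)^{T}\tilde{\mb r}$, so that $\tilde\rho(\tilde{\mb r},t)=\rho(\mb r,t)$ by definition; since the end-to-end vector is a difference of bead positions it transforms by the rotational part of Eq.~\eqref{eq:rotation_transform} only, $\tilde{\mb r}=\mb Q\mb r$. First I would assemble the elementary consequences of orthogonality of $\mb Q$: from $\mb Q\mb Q^{T}=\mb I$ the matrix $\bm\Omega:=\mb Q^{T}\dot{\mb Q}$ is antisymmetric, so $\Tr\bm\Omega=0$ and $\nabla_{\mb r}\cdot(\bm\Omega\mb r)=0$; the chain rule gives $\nabla_{\tilde{\mb r}}=\mb Q\nabla_{\mb r}$, hence the Laplacian is invariant, $\nabla^{2}_{\tilde{\mb r}}=\nabla^{2}_{\mb r}$; because $|\tilde{\mb r}|=|\mb r|$ we have $V_{\rm b}(|\tilde{\mb r}|)=V_{\rm b}(|\mb r|)$ and $\nabla_{\tilde{\mb r}}V_{\rm b}=\mb Q\nabla_{\mb r}V_{\rm b}$; and the divergence of a field that rotates as $\tilde{\mb J}(\tilde{\mb r})=\mb Q\,\mb J(\mb r)$ is a scalar, $\nabla_{\tilde{\mb r}}\cdot\tilde{\mb J}=\nabla_{\mb r}\cdot\mb J$.

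Next I would transform the strain. Using $\tilde{\mb v}=\mb Q\mb v+\dot{\mb Q}\mb Q^{T}\tilde{\mb x}$ and $\bm\kappa=(\nabla\mb v)^{T}$, the same chain-rule computation already carried out in the excerpt for the upper-convected derivative gives $\tilde{\bm\kappa}=\mb Q\bm\kappa\mb Q^{T}+\dot{\mb Q}\mb Q^{T}$, so the drift becomes $\tilde{\bm\kappa}\cdot\tilde{\mb r}=\mb Q\bm\kappa\mb r+\dot{\mb Q}\mb r=\mb Q(\bm\kappa\mb r+\bm\Omega\mb r)$, using $\mb Q\bm\Omega=\dot{\mb Q}$. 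Substituting this together with the identities above, every term inside the divergence on the frame-1 right-hand side is $\mb Q$ applied to the corresponding frame-2 term, except for an additional piece $(\bm\Omega\mb r)\rho$; pulling $\mb Q$ through the divergence, the frame-1 right-hand side equals $\bigl(\text{RHS of Eq.~\eqref{eq:FK_dumbbell_append} in frame 2}\bigr)-\nabla_{\mb r}\cdot[(\bm\Omega\mb r)\rho]$. On the left-hand side, differentiating $\tilde\rho(\tilde{\mb r},t)=\rho(\mb Q(t)^{T}\tilde{\mb r},t)$ at fixed $\tilde{\mb r}$ produces $\partial_{t}\tilde\rho|_{\tilde{\mb r}}=\partial_{t}\rho|_{\mb r}+(\dot{\mb Q}^{T}\tilde{\mb r})\cdot\nabla_{\mb r}\rho=\partial_{t}\rho|_{\mb r}-(\bm\Omega\mb r)\cdot\nabla_{\mb r}\rho$, since $\dot{\mb Q}^{T}\tilde{\mb r}=\dot{\mb Q}^{T}\mb Q\mb r=-\bm\Omega\mb r$.

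Inserting the frame-2 equation $\partial_{t}\rho|_{\mb r}=(\text{RHS in frame 2})$, the frame-1 equation reduces to the single identity $-(\bm\Omega\mb r)\cdot\nabla_{\mb r}\rho=-\nabla_{\mb r}\cdot[(\bm\Omega\mb r)\rho]$, which holds because $\nabla_{\mb r}\cdot(\bm\Omega\mb r)=\Tr\bm\Omega=0$; hence $\tilde\rho$ solves Eq.~\eqref{eq:FK_dumbbell_append} in frame 1, proving the claim. I expect the main obstacle to be nothing more than careful bookkeeping of the time dependence of $\mb Q(t)$: the extra rigid-rotation drift that the moving frame injects into $\tilde{\bm\kappa}$ through $\dot{\mb Q}\mb Q^{T}$ must precisely offset the extra transport term generated when $\partial_{t}$ acts at fixed $\tilde{\mb r}$, and the cancellation works only because a rigid rotation is divergence-free, so the argument genuinely uses both $\mb Q\mb Q^{T}=\mb I$ and the antisymmetry of $\bm\Omega$.
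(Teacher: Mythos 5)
Your argument is correct and follows essentially the same route as the paper's proof: in both, the time derivative of $\tilde\rho$ at fixed $\tilde{\mb r}$ produces the extra transport term $-(\bm\Omega\mb r)\cdot\nabla_{\mb r}\rho$ (with $\bm\Omega=\mb Q^T\dot{\mb Q}$), the transformed strain $\tilde{\bm\kappa}=\mb Q\bm\kappa\mb Q^T+\dot{\mb Q}\mb Q^T$ produces the extra drift $\nabla_{\mb r}\cdot[(\bm\Omega\mb r)\rho]$, and these cancel precisely because $\bm\Omega$ is antisymmetric so $\nabla_{\mb r}\cdot(\bm\Omega\mb r)=\Tr\bm\Omega=0$. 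You are slightly more explicit than the paper in verifying the invariance of the diffusion and potential terms (which the paper dismisses as ``straightforward''), but the decisive cancellation and the way it is obtained are the same.
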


\begin{proof}
\begin{equation}
\begin{split}
&\frac{\partial \tilde{\rho}}{\partial t} + \nabla_{\tilde{\mb r}}\cdot
\left(\left(\tilde{\bm\kappa}\cdot\tilde{\mb r}\right)\tilde{\rho}\right) \big\vert_{\rm frame ~1} \\
&= \frac{\partial \rho}{\partial t} + \dot{\mb Q}^T \tilde{\mb r}\cdot\nabla_{\mb r}\rho
+ \nabla_{\tilde{\mb r}}\cdot
\left(\left(\mb Q \bm\kappa \mb Q^T + \dot{\mb Q}\mb{Q}^T\right)\cdot \mb Q \rho \right)  \\
&=  \frac{\partial \rho}{\partial t} + \dot{\mb Q}^T \mb Q\mb r\cdot\nabla_{\mb r} \rho
+ \nabla_{\mb r}\cdot\left(\bm\kappa\cdot\mb r \rho\right) 
+ \nabla_{\mb r}\cdot\left(\mb Q^T \dot{\mb Q} \mb r \rho\right) \\
&\equiv \frac{\partial \rho}{\partial t} + \nabla_{\mb r}\cdot\left(\bm\kappa\cdot\mb r \rho\right)  \big\vert_{\rm frame ~2}.
\end{split}
\nonumber
\end{equation}
where we have used the fact that $\dot{\mb Q}^T \mb Q$ is anti-symmetric. In addition, it is straightforward to show that
the terms $\nabla^2 \rho$ and $\nabla\cdot \nabla V_{\rm b}(r)$ are invariant. Therefore we have \eqref{FK-invariance}.
\end{proof}

Accordingly, if we define $\mb c$ to be the mean value of a second-order tensor $\mb B(\mb r):\mathbb{R}^3\to \mathbb{R}^{3\times 3}$, the 
dynamics follows
\begin{equation}
\frac{\rm d}{\rm dt} \left\langle \mb B(\mb r)\right\rangle = \bm\kappa:\left\langle
\mb r\nabla_{\mb r} \otimes \mb B\right\rangle + 
\frac{2k_BT}{\gamma}\left\langle \nabla^2 \mb B\right\rangle 
+ 
\frac{2}{\gamma} \left\langle \nabla V_{\rm b}\cdot \nabla\mb B\right\rangle.
\label{eq:FK_B_evoluation_append}
\end{equation}
\begin{proposition}
If $\mb B(r)$ obeys rotational symmetry $\tilde{\mb B} := \mb B(\mb Q^T r) = \mb Q \mb B\mb Q^T$, 
then so does  \eqref{eq:FK_B_evoluation_append}.
\label{prop:B_dumbbell}
\end{proposition}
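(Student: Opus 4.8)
The plan is to derive the rotational invariance of Eq.~\eqref{eq:FK_B_evoluation_append} from the invariance of the Fokker--Planck equation already established in the preceding proposition, combined with the change of variables $\tilde{\mb r} = \mb Q(t)\mb r$ forced by the symmetry of $\mb B$. Concretely, I would show that each of the three terms on the right-hand side of \eqref{eq:FK_B_evoluation_append} transforms covariantly as $\mb Q(\cdot)\mb Q^T$, and that the combined left-hand side $\frac{\rm d}{\rm dt}\mb c - \bm\kappa:\langle \mb r\nabla_{\mb r}\otimes\mb B\rangle$ satisfies the objective-derivative symmetry \eqref{eq:rotation_c_evolution}; together these say that the equation takes the same form in both frames.

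First I would record the elementary transformation rules for the averages. Writing $\mb c = \langle\mb B\rangle$, using $|\det\mb Q| = 1$, the identity $\tilde\rho(\tilde{\mb r},t) \equiv \rho(\mb r,t)$ from the preceding proposition, and the hypothesis that $\mb B$ obeys $\mb B(\mb Q\mb r) = \mb Q\,\mb B(\mb r)\,\mb Q^T$, a change of variables in the $\tilde{\mb r}$-integral immediately gives $\tilde{\mb c} = \mb Q\,\mb c\,\mb Q^T$. For the diffusion term I would use that the Laplacian is invariant under orthogonal changes of coordinates, so that $\nabla_{\tilde{\mb r}}^2[\mb B(\tilde{\mb r})]\big|_{\tilde{\mb r}=\mb Q\mb r} = \mb Q\,[\nabla_{\mb r}^2\mb B(\mb r)]\,\mb Q^T$; for the potential term I would use that $V_{\rm b}$ depends on $\mb r$ only through $r = |\mb r|$, together with $\nabla_{\tilde{\mb r}} = \mb Q\nabla_{\mb r}$, so that the single contracted derivative index in $\nabla V_{\rm b}\cdot\nabla\mb B$ produces a factor $\mb Q^T\mb Q = \mb I$ and leaves $\mb Q\,[\nabla V_{\rm b}\cdot\nabla\mb B]\,\mb Q^T$. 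Averaging each of these against $\tilde\rho$ and changing variables as before yields $\langle\nabla^2\mb B\rangle_{\rm frame\,1} = \mb Q\langle\nabla^2\mb B\rangle_{\rm frame\,2}\mb Q^T$ and the analogous relation for the potential term.

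The key step is then to invoke the preceding proposition: since $\tilde\rho$ solves the Fokker--Planck equation \eqref{eq:FK_dumbbell_append} in frame~1 with the transformed strain $\tilde{\bm\kappa}$, multiplying that equation by $\mb B(\tilde{\mb r})$, integrating over $\tilde{\mb r}$ and integrating by parts (boundary terms vanishing by decay or compact support of $\rho$) reproduces \eqref{eq:FK_B_evoluation_append} verbatim in frame~1. Substituting the covariance relations from the previous paragraph and subtracting, the diffusion and potential terms factor out a common $\mb Q(\cdot)\mb Q^T$, and one is left with
\begin{equation*}
\frac{\rm d}{\rm dt}\tilde{\mb c} - \tilde{\bm\kappa}:\big\langle\tilde{\mb r}\nabla_{\tilde{\mb r}}\otimes\mb B\big\rangle_{\rm frame\,1} = \mb Q\Big(\frac{\rm d}{\rm dt}\mb c - \bm\kappa:\big\langle\mb r\nabla_{\mb r}\otimes\mb B\big\rangle_{\rm frame\,2}\Big)\mb Q^T ,
\end{equation*}
which is precisely \eqref{eq:rotation_c_evolution} for $\mb c$. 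Combined with the covariance of the two right-hand-side terms, this shows that \eqref{eq:FK_B_evoluation_append} is form-invariant under \eqref{eq:rotation_transform}.

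I expect the main obstacle to be exactly the identity displayed above, i.e.\ that the source term $\bm\kappa:\langle\mb r\nabla_{\mb r}\otimes\mb B\rangle$ absorbs the non-objective pieces $\dot{\mb Q}\,\mb c\,\mb Q^T + \mb Q\,\mb c\,\dot{\mb Q}^T$ that are generated when $\tilde{\mb c} = \mb Q\,\mb c\,\mb Q^T$ is differentiated in time (cf.~Eq.~\eqref{eq:Der_rot_symm_2}). Routing the argument through the preceding proposition makes this automatic; if instead one wants a direct verification, the non-routine step is to expand the fourth-order contraction $\tilde{\bm\kappa}:\langle\tilde{\mb r}\nabla_{\tilde{\mb r}}\otimes\mb B(\tilde{\mb r})\rangle$, split $\tilde{\bm\kappa}$ into its $\mb Q\bm\kappa\mb Q^T$ part and its $\dot{\mb Q}\mb Q^T$ part, and use the symmetry of $\mb B$ (via $\nabla_{\mb r}\otimes[\mb Q\mb B\mb Q^T]$ followed by integration by parts against $\mb r\rho$) to check that the $\dot{\mb Q}\mb Q^T$ part produces exactly $\dot{\mb Q}\,\mb c\,\mb Q^T + \mb Q\,\mb c\,\dot{\mb Q}^T$. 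The diffusion and potential contributions, by contrast, are routine once one notes the rotation invariance of the Laplacian and that $V_{\rm b} = V_{\rm b}(r)$.
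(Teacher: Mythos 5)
Your proof is correct, and both routes you describe do establish the proposition; the distinction between them is worth flagging because the paper uses the direct one. Your main route deduces the covariance of the objective derivative \emph{indirectly}: you first show the diffusion and potential terms are covariant, then invoke the rotational invariance of the Fokker--Planck equation (the preceding proposition) to say the moment equation itself holds in both frames, and conclude that the left-hand side must therefore transform as $\mb Q(\cdot)\mb Q^T$. This is economical, but it is an ``on shell'' argument: it establishes the covariance of $\frac{\rm d}{\rm dt}\mb c - \bm\kappa:\langle\mb r\nabla_{\mb r}\otimes\mb B\rangle$ only when the moment equation holds exactly. The paper instead verifies this covariance \emph{as a standalone algebraic fact}, which is what is actually needed later when the right-hand side is replaced by DNN surrogates that do not reproduce the microscale averages exactly. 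The key tool in the paper's direct computation is not integration by parts against $\mb r\rho$ (as you propose in your alternative route) but the pointwise identity
\begin{equation}
(\mb A\mb r)\cdot\nabla_{\mb r}\,\mb B(\mb r) = \mb A\,\mb B(\mb r) + \mb B(\mb r)\,\mb A^T
\qquad \text{for antisymmetric } \mb A,
\nonumber
\end{equation}
which is just the infinitesimal ($\mb Q = e^{t\mb A}$, $t\to 0$) version of the hypothesis $\mb B(\mb Q\mb r)=\mb Q\mb B(\mb r)\mb Q^T$. Applying it with $\mb A = \mb Q^T\dot{\mb Q}$ under the average is what makes the $\dot{\mb Q}\mb Q^T$ part of $\tilde{\bm\kappa}$ produce exactly $\dot{\mb Q}\mb c\mb Q^T + \mb Q\mb c\dot{\mb Q}^T$, cancelling the non-objective pieces coming from $\frac{\rm d}{\rm dt}\tilde{\mb c}$. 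If you pursue the direct verification, use this infinitesimal-symmetry identity in place of integration by parts; it is cleaner and is exactly the paper's step.
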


\begin{proof}
Using Eq. \eqref{eq:Der_rot_symm_2}, the individual terms in frame $1$ follow
\begin{equation}
\frac{\rm d}{\rm dt} \left\langle \tilde{\mb B}\right\rangle \big\vert_{\rm frame~1} = 
\dot{\mb Q} \left\langle \mb B \right\rangle \mb Q^T
+ \mb Q \left\langle \mb B \right\rangle \dot{\mb Q}^T 
+ \mb Q \frac{\rm d}{\rm dt} \left\langle \mb B \right\rangle \big\vert_{\rm frame~2}
\dot{\mb Q}^T.
\label{eq:FK_B_1}
\end{equation}
Note that 
\begin{equation}
\begin{split}
&\tilde{\bm\kappa}:\left\langle \tilde{\mb r}\nabla_{\tilde{\mb r}} \otimes
\tilde{\mb B}\right\rangle \big\vert_{\rm frame 1} \\
& = \left[ \left(\mb Q \bm\kappa \mb Q^T + \dot{\mb Q}\mb{Q}^T\right)
\cdot \mb Q \mb r \right] \cdot \mb Q \nabla_{\mb r} \left(\mb Q \mb B\mb Q^T\right) \\
& = \left(\bm\kappa\cdot\mb r\right)\cdot  \nabla_{\mb r} \left(\mb Q \mb B\mb Q^T\right) +
(\mb Q^T \dot{\mb Q} \mb r)\cdot \nabla_{\mb r} \left(\mb Q \mb B\mb Q^T\right) \\
& = \mb Q (\bm\kappa\cdot\mb r) \cdot \nabla_{\mb r} \mb B\mb Q^T + \mb Q\left(\mb Q^T\dot{\mb Q} \mb B 
+ \mb B \dot{\mb Q}^T \mb {Q}\right)\mb Q^T,
\end{split}
\label{eq:FK_B_2}
\end{equation}
where we have used the relation
\begin{equation}
\left(\mb A \mb r\right) \cdot \nabla \mb B = \mb A \mb B + \mb B \mb A^T,
\end{equation}
if $\mb B$ is a rotational symmetric tensor and $\mb A = \mb Q^T\dot{\mb Q}$ is an anti-symmetric tensor. 

By Eq. \eqref{eq:FK_B_1} and \eqref{eq:FK_B_2}, we see that 
\begin{equation}
\frac{\rm d}{\rm dt} \left\langle \tilde{\mb B}\right\rangle \big\vert_{\rm frame~1}
- \tilde{\bm\kappa}:\left\langle \tilde{\mb r}\nabla_{\tilde{\mb r}}
\otimes
\tilde{\mb B}\right\rangle \big\vert_{\rm frame~1} 
\equiv 
\mb Q \left[
\frac{\rm d}{\rm dt} \left\langle \mb B \right\rangle \big\vert_{\rm frame~2}
- \bm\kappa:\left\langle \mb r \nabla_{\mb r} \otimes
\mb B \right\rangle \big\vert_{\rm frame~ 2}\right]\mb Q^T .
\nonumber
\end{equation}
The rotational symmetry of the other terms follows similarly.
\end{proof}

The above analysis shows that, from the perspective of the Fokker-Planck equation, the
evolution dynamics retains the rotational symmetry. In particular, the term  
$\frac{\rm d}{\rm dt} \left\langle \tilde{\mb B}\right\rangle 
- \tilde{\bm\kappa}:\left\langle \tilde{\mb r}\nabla_{\tilde{\mb r}}
\otimes \tilde{\mb B}\right\rangle $ provides a microscopic perspective
for understanding the objective tensor derivative $\frac{\mathcal{D} \mb B}{\mathcal{D} t}$, which
we use to construct the DNN representation of the constitutive models.

\subsection{DNN representation}
Next we establish a micro-macro correspondence via a set of encoder $\left\{g_i(r)\right\}_{i=1}^n$ 
(see Proposition \ref{thm:g_f_encoder} for details) and, 
accordingly, a set of micro-scale tensor $\mb B_i$ and $\mb c_i$, i.e., 
\begin{equation}
\mb B_i(\mb r) = \left(g_i(r)\mb r\right)\left(g_i(r)\mb r\right)^T,
\quad \mb c_i = \left\langle \mb B_i \right\rangle.
\nonumber
\end{equation}
We will use $\left\{\mb c_i\right\}_{i=1}^n$ to construct the evolution dynamics 
\eqref{eq:FK_B_evoluation_append} via some proper DNN structure which retains the rotational invariance. 
In particular, we consider the fourth-order tensor $\left\langle
\mb r\nabla_{\mb r} \otimes \mb B\right\rangle$ and show that the 
following DNN representation (see also Eq. (12) in main text)
ensures the rotational symmetry of $\frac{\mathcal{D} \mb B}{\mathcal{D} t}$.
For simplicity, the subscript $i$ is ignored and we use $\mb c$ to denote the 
set of conformation tensor $\left\{\mb c_i\right\}_{i=1}^n$.
\begin{proposition}
The following ansatz of $\left\langle\mb r\nabla_{\mb r} \otimes 
\mb B\right\rangle$ 
ensures that the dynamic of evolution of $\mb c$ retains rotational invariance.
\begin{equation}
\boxed{
\left\langle\mb r\nabla_{\mb r}\otimes \mb B\right\rangle 
= \left\langle g(r)^2 \mb r \nabla_{\mb r} 
\otimes
{\color{black}{\mb r}}  {\color{black}{\mb r^T}} \right\rangle 
+ \sum_{i=1}^6 \mb E_1^{(i)} (\mb c) \otimes \mb E_2^{(i)} (\mb c)
}
\nonumber
\end{equation}
where 
$\mb E_1$ and $\mb E_2$ satisfy 
\begin{equation}
\tilde{\mb E}_1 := \mb E_1(\tilde{\mb c}) = \mb Q \mb E_1 \mb Q^T \quad
\tilde{\mb E}_2 := \mb E_2(\tilde{\mb c}) = \mb Q \mb E_2 \mb Q^T.
\nonumber
\end{equation}
\end{proposition}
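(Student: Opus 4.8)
The plan is to reduce the statement to one short index computation by separating off the part of the dynamics that is genuinely new. Write $\mb B(\mb r)=g(r)^{2}\,\mb r\mb r^{T}$; applying the product rule to $\nabla_{\mb r}\!\bigl(g(r)^{2}\mb r\mb r^{T}\bigr)$ splits the fourth-order tensor as
\[
\left\langle\mb r\nabla_{\mb r}\otimes\mb B\right\rangle
=\left\langle g(r)^{2}\,\mb r\nabla_{\mb r}\otimes\mb r\mb r^{T}\right\rangle
+\left\langle\mb r\nabla_{\mb r}g(r)^{2}\otimes\mb r\mb r^{T}\right\rangle ,
\]
and the ansatz replaces the second (``source'') piece by $\sum_{i=1}^{6}\mb E_{1}^{(i)}(\mb c)\otimes\mb E_{2}^{(i)}(\mb c)$. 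A one-line computation using $\partial_{d}(r_{a}r_{b})=\delta_{ad}r_{b}+r_{a}\delta_{bd}$ gives $\bm\kappa:\left\langle g(r)^{2}\,\mb r\nabla_{\mb r}\otimes\mb r\mb r^{T}\right\rangle=\bm\kappa\mb c+\mb c\bm\kappa^{T}$ with $\mb c=\left\langle\mb B\right\rangle$, so that
\[
\frac{\mathcal{D}\mb c}{\mathcal{D}t}
=\frac{\rm d}{\rm dt}\mb c-\bm\kappa:\left\langle\mb r\nabla_{\mb r}\otimes\mb B\right\rangle
=\overset{\triangledown}{\mb c}-\bm\kappa:\sum_{i=1}^{6}\mb E_{1}^{(i)}(\mb c)\otimes\mb E_{2}^{(i)}(\mb c).
\]
The upper-convected derivative $\overset{\triangledown}{\mb c}$ has already been shown in this appendix to be objective (the two spin terms $\dot{\mb Q}\mb c\mb Q^{T}+\mb Q\mb c\dot{\mb Q}^{T}$ produced in the rotated frame by $\tfrac{\rm d}{\rm dt}\mb c$, cf.\ Eq.~\eqref{eq:FK_B_1}, being cancelled exactly by those coming from $\bm\kappa\mb c+\mb c\bm\kappa^{T}$), so it remains only to prove that the last sum transforms homogeneously as $\mb Q(\cdot)\mb Q^{T}$.

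For that sum I would contract indices explicitly: for each $i$, $\bm\kappa:\!\bigl(\mb E_{1}^{(i)}\otimes\mb E_{2}^{(i)}\bigr)=\bigl(\bm\kappa:\mb E_{2}^{(i)}\bigr)\,\mb E_{1}^{(i)}$, a scalar multiple of $\mb E_{1}^{(i)}$. Passing to the rotated frame with $\widetilde{\mb E}_{k}^{(i)}=\mb E_{k}^{(i)}(\widetilde{\mb c})=\mb Q\mb E_{k}^{(i)}\mb Q^{T}$ (the hypothesis, applicable because $\widetilde{\mb c}_{j}=\mb Q\mb c_{j}\mb Q^{T}$, which follows from $\mb B_{j}(\mb Q^{T}\mb r)=\mb Q\mb B_{j}(\mb r)\mb Q^{T}$) and with the velocity-gradient rule $\widetilde{\bm\kappa}=\mb Q\bm\kappa\mb Q^{T}+\dot{\mb Q}\mb Q^{T}$, the part $\mb Q\bm\kappa\mb Q^{T}$ reproduces $\bm\kappa:\mb E_{2}^{(i)}$, while the spin part $\dot{\mb Q}\mb Q^{T}$ contributes a term proportional to $\Tr\!\bigl(\mb Q^{T}\dot{\mb Q}\,\mb E_{2}^{(i)}\bigr)$; since $\mb Q^{T}\dot{\mb Q}$ is antisymmetric (differentiate $\mb Q^{T}\mb Q=\mb I$) and $\mb E_{2}^{(i)}$ is symmetric, that trace vanishes. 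Hence $\widetilde{\bm\kappa}:\widetilde{\mb E}_{2}^{(i)}=\bm\kappa:\mb E_{2}^{(i)}$ and $\bm\kappa:\!\bigl(\widetilde{\mb E}_{1}^{(i)}\otimes\widetilde{\mb E}_{2}^{(i)}\bigr)=\mb Q\,\bigl[(\bm\kappa:\mb E_{2}^{(i)})\,\mb E_{1}^{(i)}\bigr]\,\mb Q^{T}$; summing over $i$ and recombining with the $\overset{\triangledown}{\mb c}$ term yields $\widetilde{\mathcal{D}\mb c/\mathcal{D}t}\big\vert_{\rm frame~1}=\mb Q\,(\mathcal{D}\mb c/\mathcal{D}t)\big\vert_{\rm frame~2}\,\mb Q^{T}$, i.e.\ Eq.~\eqref{eq:rotation_c_evolution}; the symmetry of the remaining terms of \eqref{eq:FK_B_evoluation_append} follows from the assumed transformation rules for $\mb H_{1,i},\mb H_{2,i}$.

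The only place where the specific form of the ansatz is used --- and hence the (modest) crux --- is controlling the inhomogeneous spin term $\dot{\mb Q}\mb Q^{T}$ in $\widetilde{\bm\kappa}$: a generic fourth-order tensor contracted with $\bm\kappa$ does \emph{not} transform covariantly, so the content of the proposition is precisely that the two admissible building blocks are exactly the ones killing that spin contribution --- the kinematic block because it pairs with $\tfrac{\rm d}{\rm dt}\mb c$ into $\overset{\triangledown}{\mb c}$, and each rank-one block because $\mb E_{2}^{(i)}$ is symmetric. As a consistency check I would note that the exact source term being replaced equals $\left\langle\tfrac{(g^{2})'(r)}{r}\,\mb r\mb r^{T}\otimes\mb r\mb r^{T}\right\rangle$, whose contraction with $\bm\kappa$ is $\left\langle\tfrac{(g^{2})'(r)}{r}\,(\mb r^{T}\bm\kappa\,\mb r)\,\mb r\mb r^{T}\right\rangle$ with $r$ and the scalar $\mb r^{T}\bm\kappa\,\mb r$ both frame-invariant; thus the exact term is already covariant, which shows both that the DNN substitution breaks no symmetry that was present and that restricting $\mb E_{2}^{(i)}$ to be symmetric --- mirroring the symmetric factor $\mb r\mb r^{T}$ --- is natural rather than ad hoc.
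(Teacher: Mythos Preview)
Your argument is correct. One cosmetic slip: with the paper's convention the double contraction in $\bm\kappa:\langle\mb r\nabla_{\mb r}\otimes\mb B\rangle$ hits the first two indices, so $\bm\kappa:\bigl(\mb E_{1}^{(i)}\otimes\mb E_{2}^{(i)}\bigr)=\Tr(\bm\kappa\,\mb E_{1}^{(i)})\,\mb E_{2}^{(i)}$, not $(\bm\kappa:\mb E_{2}^{(i)})\,\mb E_{1}^{(i)}$. Since both $\mb E_{1}^{(i)}$ and $\mb E_{2}^{(i)}$ obey the same transformation rule and are symmetric, your computation goes through verbatim with the labels swapped; the substantive step --- killing the spin contribution via $\Tr(\mb Q^{T}\dot{\mb Q}\,\mb S)=0$ for symmetric $\mb S$ --- is exactly right.

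Your route differs from the paper's in structure. You split $\langle\mb r\nabla_{\mb r}\otimes\mb B\rangle$ by the product rule, recognise the first piece as producing $\bm\kappa\mb c+\mb c\bm\kappa^{T}$ and hence reduce $\mathcal{D}\mb c/\mathcal{D}t$ to $\overset{\triangledown}{\mb c}$ (already shown objective) minus the $\mb E$-sum, and then verify covariance of that sum directly. The paper instead posits two general building blocks for the fourth-order tensor, $\mb F_{1}\otimes\mb F_{2}\otimes\mb F_{3}$ (plus a symmetrised companion) and $\mb E_{1}\otimes\mb E_{2}$, computes how each transforms under $\widetilde{\bm\kappa}$, and \emph{derives} that cancellation of the spin terms against those of $\tfrac{\rm d}{\rm dt}\tilde{\mb c}$ forces $\mb F_{2}\equiv\mb I$ and $\sum_{i}\mb F_{1}^{(i)}\otimes\mb I\otimes\mb F_{3}^{(i)}=\langle g(r)^{2}\,\mb r\otimes\mb I\otimes\mb r\rangle$, i.e.\ exactly the kinematic block $\langle g(r)^{2}\,\mb r\nabla_{\mb r}\otimes\mb r\mb r^{T}\rangle$. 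So the paper's argument is constructive (it explains \emph{why} the ansatz must look like this), whereas yours is a clean direct verification of the proposition as stated; your approach is shorter and leverages the already-established objectivity of $\overset{\triangledown}{\mb c}$, while the paper's carries the extra information that these two blocks are essentially the only admissible ones. Your closing consistency check --- rewriting the exact source as $\langle\tfrac{(g^{2})'(r)}{r}\,\mb r\mb r^{T}\otimes\mb r\mb r^{T}\rangle$ and noting $\mb r^{T}\bm\kappa\,\mb r$ is spin-invariant --- is a nice addition not present in the paper.
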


\begin{proof}
Without loss of generality, we represent the fourth order tensor by the following two bases
\begin{equation}
\begin{split}
&\mb F_1(\mb c)\otimes \mb F_2(\mb c)\otimes \mb F_3(\mb c)  
+ \mb F_1(\mb c)\otimes \left(\mb F_2(\mb c)\otimes \mb F_3(\mb c)\right)^{T_{\{2,3\}}} 
\\
&\mb E_1(\mb c) \otimes \mb E_2(\mb c) \\
&\mb F_1, \mb F_3 \in \mathbb{R}^3, \mb F_2 \in \mathbb{R}^{3\times 3}, \mb E_1, \mb E_2 \in \mathbb{R}^{3\times 3},   
\end{split}
\nonumber
\end{equation}
where the super-script $T_{\{2,3\}}$ represent the transpose between the 2nd and 3rd indices; also
$\mb F_1$, $\mb F_2$, $\mb F_3$, $\mb E_1$ and $\mb E_2$ satisfy the symmetry 
conditions
\begin{equation}
\begin{split}
&\mb F_1(\tilde{\mb c}) = \mb Q \mb F_1 \quad    
\mb F_3(\tilde{\mb c}) = \mb Q \mb F_3 \\
&\mb E_1(\tilde{\mb c}) = \mb Q \mb E_1 \mb Q^T \quad
\mb E_2(\tilde{\mb c}) = \mb Q \mb E_2 \mb Q^T \quad 
\mb F_2(\tilde{\mb c}) = \mb Q \mb F_2 \mb Q^T.
\end{split}
\nonumber
\end{equation}

For the term $\mb E_1(\mb c) \otimes \mb E_2(\mb c)$, we have 
\begin{equation}
\bm\kappa:\mb E_1(\mb c)\otimes \mb E_2(\mb c)\big\vert_{\rm frame 2}
= \Tr(\bm\kappa \mb E_1) \mb E_2
\nonumber
\end{equation}
and
\begin{equation}
\begin{split}
&\tilde{\bm\kappa}:\tilde{\mb E}_1(\mb c)\otimes \tilde{\mb E}_2(\mb c)
\big\vert_{\rm frame 1}\\
&= \left(\mb Q\bm\kappa \mb Q^T + \dot{\mb Q}\mb Q^T \right):\left(\mb Q\mb E_1\mb Q^T 
    \otimes \tilde{\mb E}_2\right)\\
&= \Tr(\bm\kappa \mb E_1)\tilde{\mb E}_2 + \Tr(\dot{\mb Q}\mb Q^T \mb Q\mb E_1\mb Q^T) 
\tilde{\mb E}_2\\
&= \Tr(\bm\kappa \mb E_1)\tilde{\mb E}_2 \\
&\equiv 
\mb Q\left(
\bm\kappa:\mb E_1(\mb c)\otimes \mb E_2(\mb c)\big\vert_{\rm frame 2}\right)\mb Q^T,
\end{split}
\label{eq:G_1_2_rot_symmetry}
\end{equation}
where we have used $\Tr(\dot{\mb Q}\mb Q^T) \equiv 0$. 

For the term $\mb F_1(\mb c)\otimes \mb F_2(\mb c)\otimes \mb F_3(\mb c)  
+ \mb F_1(\mb c)\otimes \left(\mb F_2(\mb c)\otimes \mb F_3(\mb c)\right)^{T_{\{2,3\}}}$,   
we have
\begin{equation}
\bm\kappa:\mb F_1(\mb c)\otimes \mb F_2(\mb c)\otimes \mb F_3(\mb c) \big\vert_{\rm frame 2}
  = \mb F_2 ^T \bm\kappa \mb F_1 \mb F_3^T
\nonumber
\end{equation}
and 
\begin{equation}
\tilde{\bm\kappa}:\tilde{\mb F}_1(\mb c)\otimes \tilde{\mb F}_2(\mb c)\otimes \tilde{\mb F}_3(\mb c)
\big\vert_{\rm frame 1} 
= \mb Q\mb F_2^T \bm\kappa \mb F_1 \mb F_3^T \mb Q^T + \mb Q\mb F_2^T \mb Q^T \dot{\mb Q}
\mb F_1\mb F_3^T \mb Q^T.
\nonumber
\end{equation}
On the other hand, note that 
\begin{equation}
\frac{\diff \tilde{\mb B} }{\diff t} \big\vert_{\rm frame 1} = 
\dot{\mb Q} \mb B \mb Q^T + \mb Q \mb B \dot{\mb Q}^T + \mb Q \frac{\diff \mb B}{\diff t}
\big\vert_{\rm frame~2} \mb Q^T.
\end{equation}
To ensure the rotational symmetry of $\frac{\mathcal{D}\mb B}{\mathcal{D} t}$, we have 
\begin{equation}
\mb F_2 \equiv \mb I, \sum_{i=1}^{K_1} \mb F_1^{(i)} \otimes \mb I \otimes \mb F_3^{(i)} = \left\langle 
g(r)\mb r \otimes \mb I \otimes g(r)\mb r \right\rangle.
\label{eq:choice_F_1_2_3}  
\end{equation}
Hence, we have
\begin{equation}
\begin{split}
&\frac{\rm d}{\rm d t}\tilde{\mb c} - \tilde{\bm\kappa}:
\left( \sum_{i=1}^{K_1} \tilde{\mb F}_1^{(i)}\otimes \tilde{\mb F}_2^{(i)} \otimes \tilde{\mb F}_3^{(i)}
+ \tilde{\mb F}_1^{(i)}\otimes \left(\tilde{\mb F}_2^{(i)} \otimes 
 \tilde{\mb F}_3^{(i)}\right)^{T_{\{2,3\}}}\right)
\big\vert_{\rm frame 1}\\
&\equiv  
\mb Q \left[
\frac{\rm d}{\rm d t}\mb c -  \bm\kappa:
\left( \sum_{i=1}^{K_1} \mb F_1^{(i)}\otimes \mb F_2^{(i)} \otimes \mb F_3^{(i)}
+ \mb F_1^{(i)}\otimes \left(\mb F_2^{(i)} \otimes \mb F_3^{(i)}\right)^{T_{\{2,3\}}}\right)
\big\vert_{\rm frame 2}
\right] \mb Q^T.
\label{eq:F_1_2_3_rot_symmetry}
\end{split}
\end{equation}
Furthermore, using Eq. \eqref{eq:choice_F_1_2_3}, we obtain
\begin{equation}
\sum_{i=1}^{K_1} \mb F_1^{(i)}\otimes \mb F_2^{(i)} \otimes \mb F_3^{(i)}
+ \mb F_1^{(i)}\otimes \left(\mb F_2^{(i)} \otimes \mb F_3^{(i)}\right)^{T_{\{2,3\}}}
= \left\langle g(r)^2 \mb r\nabla_{\mb r} 
\otimes
{\color{black}{\mb r}}  {\color{black}{\mb r^T}} \right\rangle.
\label{eq:F_1_2_3_sum}  
\end{equation}
Accordingly, the remaining part of $\left\langle \mb r\nabla_{\mb r}\otimes \mb B\right\rangle$ is 
expanded by 
\begin{equation}
\left\langle\mb r\nabla_{\color{black}{\mb r}} g(r)^2 
\otimes {\mb r} \mb r^T \right\rangle
= 
\sum_{i=1}^{K_2} 
\mb E_1^{(i)}(\mb c) \otimes \mb E_2^{(i)}(\mb c).
\label{eq:G_1_2_sum}
\end{equation}
where $K_2 = 6$ due to the tensor index symmetry of $1$ and $2$, as well as $3$ and $4$.

Combining Eq. \eqref{eq:F_1_2_3_rot_symmetry}, \eqref{eq:F_1_2_3_sum} and
\eqref{eq:G_1_2_sum}, we conclude that the decomposition 
\begin{equation}
\boxed{
\left\langle\mb r\nabla_{\mb r}\otimes \mb B\right\rangle 
= \left\langle g(r)^2 \mb r \nabla_{\mb r} 
\otimes
{\color{black}{\mb r}}  {\color{black}{\mb r^T}} \right\rangle 
+ \sum_{i=1}^6 \mb E_1^{(i)} (\mb c) \otimes \mb E_2^{(i)} (\mb c)
}
\nonumber
\end{equation}
ensures the rotational invariance in the dynamic equation of $\mb c$.
\end{proof}

Finally, we show that the encoder $\mb f_i(\mb r)$ takes the form $g_i(\vert \mb r \vert) \mb r$ (see 
also Eq. (10) in the main text).
\begin{proposition}
If $\mb f(\mb r): \mathbb{R}^3 \to \mathbb{R}^3$ satisfies 
\begin{equation}
\mb f(\mb Q \mb r) =  \mb Q \mb f(\mb r) 
\nonumber
\end{equation}
for an arbitrary unitary matrix  $\mb Q \in \mathbb{R}^3$,  then
$\mb f(\mb r)$ must take the form $\mb f(\mb r) = g(r)\mb r$, where $g(r): \mathbb{R} \to \mathbb{R}$ is a scalar function 
and $r = \vert \mb r\vert$.
\label{thm:g_f_encoder}
\end{proposition}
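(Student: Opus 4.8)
The plan is to exploit the action of the stabilizer subgroup of a fixed vector. First I would fix an arbitrary $\mb r \neq \mb 0$ and consider the subgroup $G_{\mb r} = \{\mb Q : \mb Q\mb r = \mb r\}$ of orthogonal matrices that fix $\mb r$. Applying the equivariance hypothesis $\mb f(\mb Q\mb r) = \mb Q\mb f(\mb r)$ to any $\mb Q \in G_{\mb r}$ yields $\mb f(\mb r) = \mb Q\mb f(\mb r)$, so $\mb f(\mb r)$ must lie in the subspace of $\mathbb{R}^3$ fixed pointwise by all of $G_{\mb r}$.

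The key step is then to identify that fixed subspace with $\mathrm{span}(\mb r)$. Writing $\widehat{\mb r} = \mb r/r$ and decomposing any $\mb v \in \mathbb{R}^3$ as $\mb v = \alpha\,\widehat{\mb r} + \mb w$ with $\mb w \perp \mb r$, every $\mb Q \in G_{\mb r}$ fixes $\alpha\,\widehat{\mb r}$ and acts on the plane $\mb r^\perp$ as a copy of $SO(2)$; hence $\mb v$ is fixed by $G_{\mb r}$ if and only if $\mb w$ is fixed by all planar rotations, i.e. $\mb w = \mb 0$. (If improper $\mb Q$ are allowed, a single reflection fixing $\mb r$ and negating $\mb w$ already forces $\mb w = \mb 0$, so the argument only gets easier.) Therefore $\mb f(\mb r) = \lambda(\mb r)\,\mb r$ for a well-defined scalar $\lambda(\mb r) = \mb f(\mb r)\cdot\mb r/r^2$; no regularity of $\mb f$ is needed here.

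Next I would show that $\lambda$ depends on $\mb r$ only through $r = |\mb r|$. For an arbitrary orthogonal $\mb Q$, comparing $\lambda(\mb Q\mb r)\,(\mb Q\mb r) = \mb f(\mb Q\mb r) = \mb Q\mb f(\mb r) = \lambda(\mb r)\,(\mb Q\mb r)$ and using $\mb Q\mb r \neq \mb 0$ gives $\lambda(\mb Q\mb r) = \lambda(\mb r)$. Since the rotation group acts transitively on each sphere $\{|\mb r| = \mathrm{const}\}$, $\lambda$ is constant on spheres, so setting $g(r) := \lambda(\mb r)$ for any $\mb r$ of length $r$ produces a scalar function $g:(0,\infty)\to\mathbb{R}$ with $\mb f(\mb r) = g(r)\mb r$ for all $\mb r \neq \mb 0$. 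For $\mb r = \mb 0$, the relation $\mb f(\mb 0) = \mb Q\mb f(\mb 0)$ for all $\mb Q$ forces $\mb f(\mb 0) = \mb 0$, which is consistent with $g(0)\cdot\mb 0$ for any choice of $g(0)$ (or, if $\mb f$ is taken continuous, with $g(0) = \lim_{r\to 0^+} g(r)$).

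I expect the only genuinely delicate point to be the identification of the $G_{\mb r}$-fixed subspace with $\mathrm{span}(\mb r)$, which rests on the elementary fact that the only vector in a plane fixed by every planar rotation is the zero vector; the remaining steps are short computations that apply the equivariance relation twice — once with stabilizing $\mb Q$ to pin the direction of $\mb f(\mb r)$, and once with a transitive family of $\mb Q$'s to pin the magnitude to a radial function.
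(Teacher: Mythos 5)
Your proof is correct, and at bottom it is driven by the same geometric fact as the paper's argument: the map $\mb Q \mapsto \mb Q\mb r$ has a large fiber, and equating $\mb f$ along that fiber forces the transverse components of $\mb f(\mb r)$ to vanish. The paper realizes this by a coordinate computation: it picks the explicit one-parameter family of orthogonal matrices $\mb Q(\theta)$ sending $r\mb e_1$ to $r\mb e_2$, observes that $\mb f(r\mb e_2)$ cannot depend on $\theta$, and concludes $f_2(r\mb e_1)=f_3(r\mb e_1)=0$. You instead work with the stabilizer $G_{\mb r}$ of $\mb r$ and the elementary representation-theoretic fact that the $G_{\mb r}$-fixed subspace of $\mathbb{R}^3$ is exactly $\mathrm{span}(\mb r)$; note that $\mb Q(0)^{-1}\mb Q(\theta)$ lies in the stabilizer of $r\mb e_1$, so the paper's family is a conjugated version of your argument. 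Your version is slightly more complete in two respects that the paper leaves implicit: you explicitly verify that the scalar $\lambda(\mb r)=\mb f(\mb r)\cdot\mb r/r^2$ is constant on spheres (hence a function of $r$ alone), and you dispose of the $\mb r=\mb 0$ case. The paper's proof is more concrete and self-contained at the level of matrix entries, and implicitly invokes equivariance again to carry the conclusion from $r\mb e_1$ to general $\mb r$. Both are sound.
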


\begin{proof}
Let $\mb e_1$, $\mb e_2$ and $\mb e_3$ the basis vectors of the cartesian coordinate space. In particular, we consider $\mb r = r \mb e_1$ and
denote $\mb f(\mb r)$ by $\left(f_1(\mb r), f_2(\mb r), f_3(\mb r)\right)$. By choosing $\mb Q$ to be of the form
\begin{equation}
\mb Q = \begin{pmatrix}
0 &\cos\theta &\sin\theta \\
1 &0 &0\\
0 &-\sin\theta &\cos\theta
\end{pmatrix},
\nonumber
\end{equation}
we have 
\begin{equation}
\mb f(\mb Q\mb r) = 
\begin{pmatrix}
f_1(r\mb e_2) \\ f_2(r\mb e_2) \\ f_3(r\mb e_2)
\end{pmatrix}  
= \begin{pmatrix} 
f_2(r\mb e_1)\cos\theta + f_3(r\mb e_1)\sin\theta\\
 f_1(r\mb e_1) \\
 -f_2(r\mb e_1)\sin\theta + f_3(r\mb e_1)\cos\theta
 \end{pmatrix}.
 \nonumber
\end{equation}
In particular, by choosing $\theta = 0$ and $\theta = \pi$, respectively, we 
get $f_2(r\mb e_1) = f_3(r\mb e_1) = 0$, i.e., $\mb f(r\mb e_1) = \left(f_1(r\mb e_1), 0, 0\right)$.
\end{proof}

\section{The micro-scale model of the dumbbell suspension}
The polymer solution is modeled by suspensions of dumbbell polymer molecules in explicit solvent. The
bond interaction is modeled by the FENE potential, i.e., 
\begin{equation}
V_{\rm b}(r)
=-\frac{k_s}{2}r^2_{0}
\log\left[ 1-\frac{r^2}{r^2_{0}}\right],
\nonumber
\end{equation}
where $k_s$ is the spring constant and $r = \vert \mb r\vert$ and $\mb r$ is the end-end vector between the two beads of a polymer molecule.  
In addition, pairwise interactions are imposed between all particles (except the intramolecular
pairs bonded by $V_b$) under dissipative particle dynamics \cite{Hoogerbrugge_SMH_1992,Groot_DPD_1997}, i.e.,
\begin{equation}
\begin{split}
&\mb{F}_{ij} = \mathbf{F}_{ij}^C +  \mathbf{F}_{ij}^D +   \mathbf{F}_{ij}^D \\
&\mathbf{F}_{ij}^C = 
\begin{cases}
a (1.0 - r_{ij}/r_c) \mathbf{e}_{ij}, & r_{ij} < r_c \\
    0, &r_{ij} > r_c
\end{cases} 
\\
&\mathbf{F}_{ij}^D = 
\begin{cases}
-\gamma w^{D}(r_{ij})(\mathbf{v}_{ij} \cdot \mathbf{e}_{ij}) \mathbf{e}_{ij}, & r_{ij} < r_c\\
    0, &r_{ij} > r_c
\end{cases}  \\  
&\mathbf{F}_{ij}^R = 
\begin{cases}
\sigma w^{R}(r_{ij}) \xi_{ij} \mathbf{e}_{ij}, & r_{ij} < r_c\\
    0, &r_{ij} > r_c
\end{cases},
\end{split}\nonumber
\end{equation}
where $\mathbf{r}_{ij} = \mathbf{r}_{i} - \mathbf{r}_{j}$, $r_{ij} = |\mathbf{r}_{ij}|$, $\mathbf{e}_{ij} = 
\mathbf{r}_{ij}/r_{ij}$, and $\mathbf{v}_{ij} = \mathbf{v}_{i} - \mathbf{v}_{j}$,
$\xi_{ij}$ are independent identically distributed (i.i.d.) Gaussian random variables with zero mean and unit 
variance. 
$\mathbf{F}^C_{ij}$, $\mathbf{F}^D_{ij}$, $\mathbf{F}^R_{ij}$ are the total conservative, dissipative and random 
forces between particles $i$ and $j$, respectively. $r_c$ is the cut-off radius beyond which all interactions vanish. 
The coefficients $a$, $\gamma$ and $\sigma$ represent the strength of 
the conservative, dissipative and random force, respectively. The last two coefficients are coupled with 
the temperature of the system by the fluctuation-dissipation theorem \cite{Espanol_SMO_1995} as $\sigma^2 = 2 \gamma k_BT$. 
Similar to Ref. \cite{Lei_Cas_2010}, the weight functions $w^{D}(r)$ and $w^{R}(r)$ are defined by
\begin{equation}
\begin{split}
w^{D}(r_{ij}) &= \left[w^R(r_{ij})\right]^2 \\
w^{R}(r_{ij}) &= (1.0 - r_{ij} /r_c)^{k}.
\end{split}
\nonumber
\end{equation}
We refer to Ref. \cite{Lei_Karniadakis_JCP_2017} for the details of the reverse Poiseuille flow simulation 
and the calculation of the shear rate dependent viscosity.
In all the numerical experiments, the number density of the solvent particle $n_s$ is set to be $4.0$ and 
the number density of the polymer molecule $n_p$ is set to be $0.5$. 
Other model parameters are given in Tab. \ref{tab:polymer_model_parameter}.
\begin{table}[htbp]
\centering
\caption{Parameters (in reduced unit) of the micro-scale model of the polymer solution}
\begin{tabular*}
{0.45\textwidth}{c @{\extracolsep{\fill}} cccccc}
\hline\hline
&  $a$ & $\gamma$ & $\sigma$ & $k$ & $r_c$\\
\hline
\text{S-S} & $4.0$ & $5.0$ & $1.58$ &$0.25$ & $1.0$\\ 
\text{S-P} & $0.0$ & $40.0$ & $4.47$ &$0.0$ & $1.0$\\ 
\text{P-P} & $0.04$ & $0.01$ & $0.071$ &$0.5$ & $3.5$\\
\hline
\label{tab:polymer_model_parameter}
\end{tabular*}
\end{table}

The training dataset is collected from micro-scale shear flow simulations of the polymer solution
in a domain $[0, 20]\times[0, 20]\times [0, 20]$, with periodic boundary condition imposed in each direction. 
The Lees-Edwards boundary condition \cite{Lees_BC_1972} is used to impose the shear flow rates $\dot{\gamma}$. 
The simulation is run for a production period of $5\times10^4$ with time step $10^{-3}$. $36000$ samples 
of the polymer configurations are collected with $\dot{\gamma}$ uniformly selected between $[0, 0.06]$. 
$32000$ samples are used for training and the remaining ones are used for testing.

\section{Numerical results of a three-bead suspension}
To demonstrate the present DeePN$^2$ method can be applied to systems with high-dimensional configuration space, we 
consider a suspension of 3-bead polymer molecule with the 
intramolecular potential $V_p(\mb r_1, \mb r_2)$ governed by
\begin{equation}
V_p(\mb r_1, \mb r_2) = V_{\rm b}(\mb r_1) + V_{\rm b}(\mb r_2) + V_{\rm a}(\mb r_1, \mb r_2),
\nonumber
\end{equation}
where $V_{\rm b}$ is the FENE bond potential similar to the dumbbell system, $V_{\rm a}$ is the angle
potential defined by
\begin{equation}
V_{\rm a}(\mb r_1, \mb r_2) = \frac{1}{2} k_{a}(\theta - \theta_0)^2,
\nonumber
\end{equation}
where $\theta = \cos^{-1}(\mb r_1\cdot\mb r_2/\vert\mb r_1\vert\cdot\vert\mb r_2\vert)$ is the angle between the two bonds, 
$k_a = 2k_BT$ and $\theta_0 = 2\pi/3$.

We define the generalized conformation tensors 
\begin{equation}
\mb c_i = \left\langle \mb B_i\right\rangle := \left\langle g_i(\vert \mb r_1\vert, \vert \mb r_2\vert, 
\vert \mb r_{12}\vert)^2 {\mb r'}_i {\mb r''}_i^T\right\rangle, 
\label{eq:c_triplet}
\nonumber
\end{equation}
where ${\mb r'}_i$ and ${\mb r''}_i$ are chosen to be either $\mb r_1$ or $\mb r_2$.  
Similar to the dumbbell model, we set $\mb c_1 = \left\langle \mb r_1 \mb r_1^T\right\rangle$, 
$\mb c_2 = \left\langle \mb r_1 \mb r_2^T\right\rangle$, 
and choose the eigen-space of $\mb c_1$ as the reference frame for the training process. 
We employ the constructed model to  simulate the reverse Poiseuille flow. The setup is similar to the dumbbell suspension. 
Fig. \ref{fig:angle_MD_ML} shows  the evolution of the velocity 
profile and the mean cosine value
(i.e., ${\rm Tr}(\mb c_2)/{\rm Tr}(\mb c_1)$) with the 
body force $f_{\rm ext} = 0.0066$.  Predictions from DeePN$^2$ agree well 
with the MD results. In contrast, predictions from the Hookean model show apparent deviations.

This numerical example shows that the present DeePN$^2$ method is not limited by the high-dimensionality 
of the polymer configuration space, in contrast with the previous approaches based on the direct approximation of the probability
density $\rho(\mb r, t)$.  More sophisticated learning framework applicable to the general multi-bead polymer suspension with 
complex intramolecular potential requires further investigations, and will be presented in following works.

\begin{figure}[htbp]
\centering
\includegraphics[trim=60 20 100 60,clip,scale=0.25]{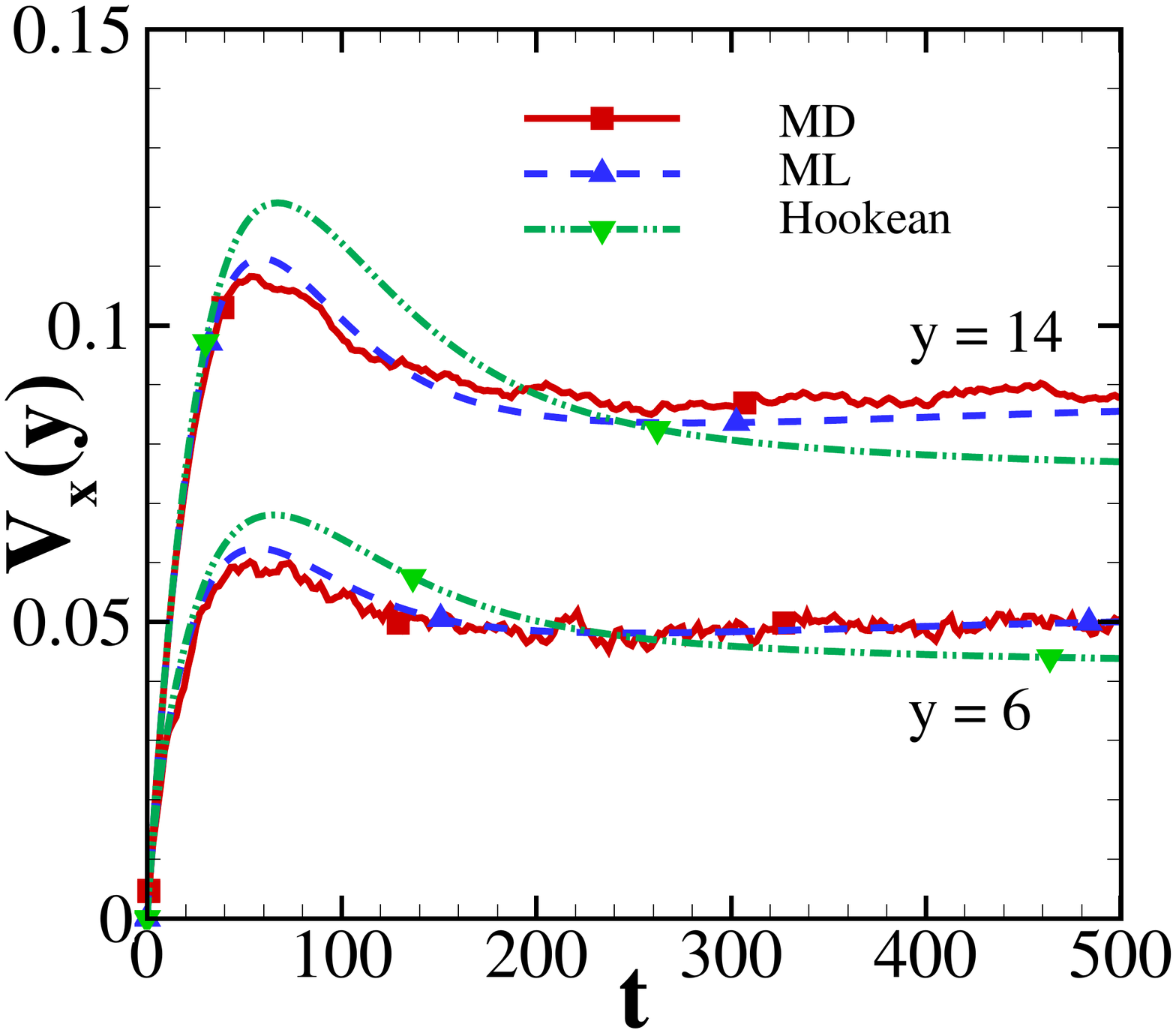}
\includegraphics[trim=60 20 100 60,clip,scale=0.25]{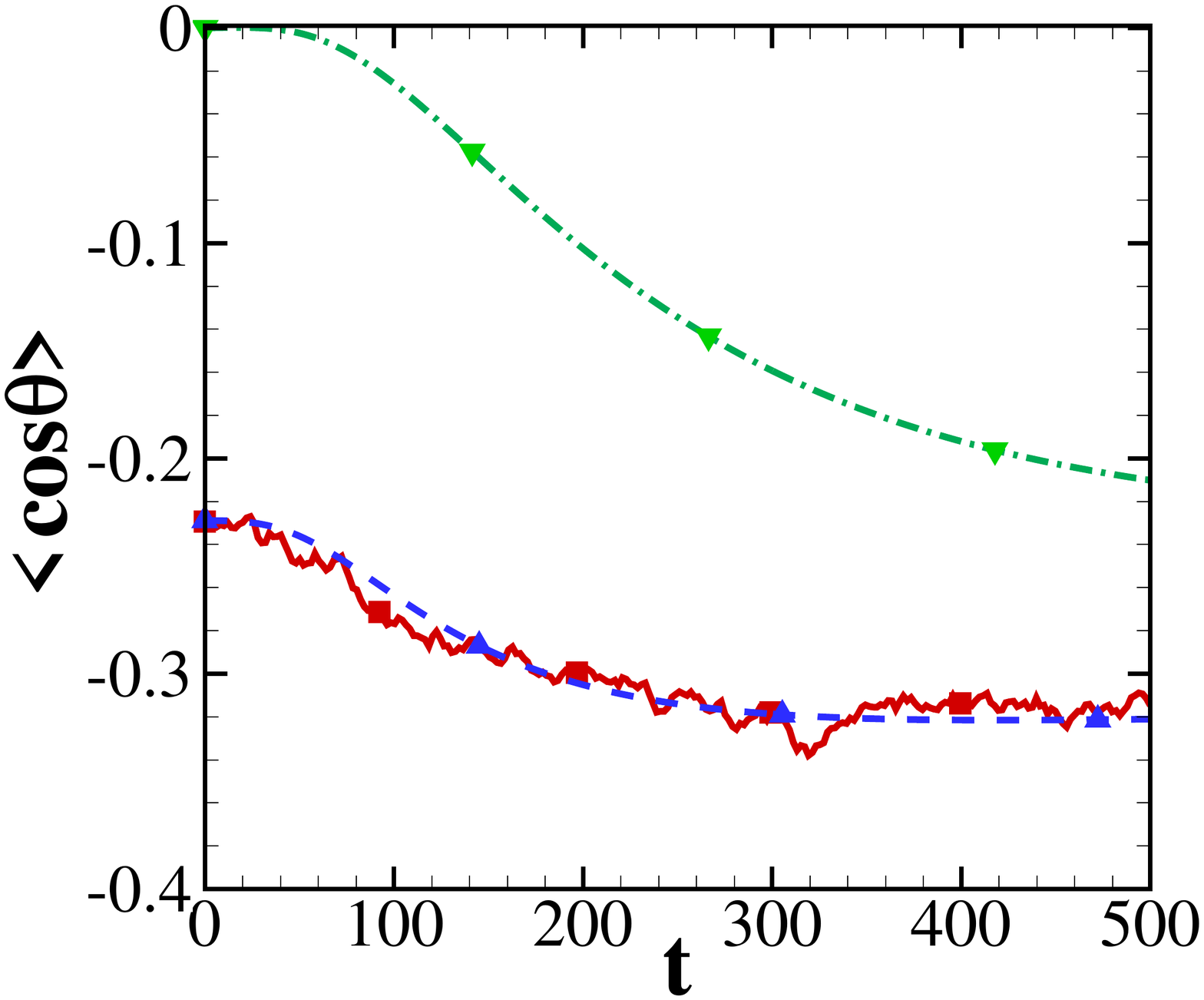}
\caption{Comparison of  the time evolution of reverse Poiseuille flow of a three-bead suspension 
obtained from direction MD simulation, DeePN$^2$ and 
Hookean model. The parameters of the Hookean model are chosen such that the polymer average bond length matches the 
MD results. \textbf{Left}: Velocity at $y = 6$ and $14$. \textbf{Right}: Average value of $\cos\theta$ at $y=6$.}
\label{fig:angle_MD_ML}
\end{figure}

\section{Training procedure}
\label{sec:training}
The constructed DeePN$^2$ model is represented by various DNNs for the encoders $\left\{g_j(r)\right\}_{j=1}^{n}$,
stress model $\mb G$,  evolution dynamics $\left\{\mb H_{1,j}\right\}_{j=1}^n$, $\left\{\mb H_{2,j}\right\}_{j=1}^n$  
and the $4$th order tensors $\left\{\mathcal{E}_j\right\}_{j=1}^{n}$ of the objective tensor derivatives. In particular, by 
choosing $g_1(r) \equiv 1$, $\mb G \propto \mb H_{2,1}$ so we do not
need to train $\mb G$ separately. The loss function is defined by
\begin{equation}
L = \lambda_{H_1} L_{H_1} + \lambda_{H_2} L_{H_2} + \lambda_{\mathcal{E}} L_{\mathcal{E}},
\nonumber
\end{equation}
where $\lambda_{H_1}$, $\lambda_{H_2}$ and $\lambda_{\mathcal{E}}$ are hyperparameters specified later. For each
training batch of $m$ training samples,  $L_{H_1}$, $L_{H_2}$, $L_{\mathcal{E}}$ of the dumbbell system are given by 
\begin{equation}
\begin{split}
L_{H_1} &= \sum_{i=1}^m \sum_{j=1}^{n}\left\Vert \mb V^{(i)} \widehat{\mb H}_{1,j}(\widehat{\mb c}_1^{(i)}, \cdots \widehat{\mb c}_n^{(i)}) {\mb V^{(i)}}^T    
- \left\langle \nabla^2_{\mb r} \mb B_j(\mb r) \right\rangle^{(i)} \right\Vert^2  \\
L_{H_2} &= \sum_{i=1}^m \sum_{j=1}^{n}\left\Vert  \mb V^{(i)} \widehat{\mb H}_{2,j}(\widehat{\mb c}_1^{(i)}, \cdots \widehat{\mb c}_n^{(i)}) {\mb V^{(i)}}^T     
- \left\langle \nabla V_{\rm b}(r) \cdot \nabla_{\mb r} \mb B_i(\mb r) \right\rangle^{(i)} \right\Vert^2  \\
L_{\mathcal{E}} &= \sum_{i=1}^m \sum_{j=1}^{n} \left\Vert \sum_{k}\mb V^{(i)} \widehat{\mb E}_{1,j}^{(k)} (\widehat{\mb c}_1^{(i)}, \cdots \widehat{\mb c}_n^{(i)}) {\mb V^{(i)}}^T   
\otimes  \mb V^{(i)} \widehat{\mb E}_{2,j}^{(k)} (\widehat{\mb c}_1^{(i)}, \cdots \widehat{\mb c}_n^{(i)}) {\mb V^{(i)}}^T \right .\\
&-
\left . \left\langle\mb r\nabla_{\color{black}{\mb r}} g(r)^2 
\otimes {\mb r} \mb r^T \right\rangle^{(i)} 
\right \Vert^2,  
\end{split}
\end{equation}
where $\Vert \cdot \Vert^2$ denotes the total sum of squares of the entries in the tensor. $\mb V^{(i)}$
is the matrix composed of the eigenvectors of $\mb c_1 = \langle \mb r\mb r^T\rangle$ of the $i\mhyphen$th sample. 

Furthermore, we note that $\mb c$, $\mb H_1$, $\mb H_2$, $\mb E_1$ and $\mb E_2$ are all symmetric. Accordingly, the 
DNN inputs are composed
of the upper-triangular parts of the $\mb c$ and the outputs are the upper-triangular parts of the representation tensors.  
Specifically, $\left\{g_j\right\}_{j=1}^n$, $\left\{\mb H_{1,j}\right\}_{j=1}^n$, $\left\{\mb H_{2,j}\right\}_{j=1}^n$,
$\left\{\mb E_{1,j}, \mb E_{2,j}\right\}_{j=1}^n$ are represented by the $8\mhyphen$layer fully-connected DNNs. The 
number of neurons in the hidden layers are set to be {\small $(120, 120, 120, 120, 120, 120)$, $(300, 300, 400, 400, 300, 300)$, 
$(400, 600, 600, 600, 600, 400)$, $(300, 300, 300, 300, 300, 300)$}, respectively. The activation function is taken
to be the hyperbolic tangent. 

The DNNs are trained by the Adam stochastic gradient descent method \cite{Kingma_Ba_Adam_2015} for $400$ epochs, using
$75$ samples per batch size. The initial learning rate is $1.8\times 10^{-4}$ and decay rate is $0.8$ per $9000$ steps.
The hyper-parameters $\lambda_{H_1}$, $\lambda_{H_2}$ and $\lambda_{\mathcal{E}}$ are chosen in the following two ways. In the 
first setup, we set them to be constant throughout the training process, e.g., 
$\lambda_{H_1} = \lambda_{H_2} = \lambda_{\mathcal{E}} = 1/3$. In the second setup, the hyper-parameters are updated 
every $N_{\lambda}$ epochs by
\begin{equation}
\lambda_{H_1} = \frac{\widetilde{L}_{H_1}}{\widetilde{L}_{H_1} + \widetilde{L}_{H_2}+ \widetilde{L}_{\mathcal{E}}},~
\lambda_{H_2} = \frac{\widetilde{L}_{H_2}}{\widetilde{L}_{H_1} + \widetilde{L}_{H_2}+ \widetilde{L}_{\mathcal{E}}},~ 
\lambda_{\mathcal{E}} = \frac{\widetilde{L}_{\mathcal{E}}}{\widetilde{L}_{H_1} + \widetilde{L}_{H_2}+ \widetilde{L}_{\mathcal{E}}},
\end{equation}
where $\widetilde{L}_{(\cdot)}$ denotes the mean of the loss  during the past $N_{\lambda}$ epochs. For the present
study, both approaches achieve a loss $L$ smaller than $1\times10^{-4}$ and the root of relative loss less than 
$1.6\times10^{-2}$. More sophisticated choices of $\lambda_{H_1}$, $\lambda_{H_2}$ and $\lambda_{\mathcal{E}}$ as well as
other formulation of $L$  will be investigated in future work.

\section{Computational cost}
We consider two dynamic processes:  relaxation to quasi-equilibrium and the development of 
the reverse Poiseuille flow. For relaxation to quasi-equilibrium, the micro-scale simulation is conducted in
a domain $[0, 10]\times[0, 10]\times [0, 10]$ (in reduced unit), which is mapped into a volume unit in the continuum  
DeePN$^2$, Hookean and FENE-P models. All simulations are run for a production period of $360$ (in reduced unit). 
For the case of the reverse Poiseuille flow (RPF), the microscale simulation is conducted in
a domain $[0, 40]\times[0, 80]\times [0, 40]$.  The simulations of the continuum  
DeePN$^2$, Hookean, and FENE-P models are conducted by mapping the domain into $20$  
volume units along y direction. All simulations are run for a production period 
of $550$. 
The computational cost for both systems is reported in Tab. \ref{tab:computational_cost_all_models}.
All simulations are performed on Michigan State University HPCC supercomputer with 
Intel(R) Xeon(R) CPU E5-2670 v2. 
\begin{table}[htbp]
\centering
\caption{Computational cost (in CPU-second) using the MD model and the continuum
DeePN$^2$, FENE-P and Hookean models.}
\begin{tabular*}
{0.6\textwidth}{c @{\extracolsep{\fill}} ccccc}
\hline\hline
&  {\rm MD} & DeePN$^2$ & FENE-P & Hookean & \\
\hline
\text{Quasi-equilibrium} & ~$2.35\times 10^4$ & $4.1$ & $0.56$ &$0.51$ \\ 
\text{RPF (dumbbell)} & ~$9.24\times 10^6$ & $85.6$ & $10.2$ &$9.7$ \\ 
\hline
\label{tab:computational_cost_all_models}
\end{tabular*}
\end{table}

We note that the size of the volume unit is chosen empirically in the continuum models of the 
flow systems considered in the present work. Our sensitivity studies show that the numerical results
of the DeePN$^2$ model agree well with the full MD when the average number of polymer within 
a unit volume is greater than $200$. For all the cases, the computational cost of the DeePN$^2$
model is less than $0.05\%$ of the computational cost of the full MD simulations and less than 10 times
the cost of empirical continuum models.

\begin{acknowledgments}
We thank Jiequn Han, Chao Ma, Linfeng Zhang for helpful discussions.
The work of Huan Lei is supported in part by the Extreme Science and Engineering Discovery Environment (XSEDE) Bridges
at the Pittsburgh Supercomputing Center through allocation DMS190030 and the High Performance Computing Center at Michigan State University.  
The work of Weinan E and Lei Wu is supported in part by a gift to Princeton University from
iFlytek.
\end{acknowledgments}


%

\end{document}